\DeclareMathOperator{\Prob}{P}
\DeclareMathOperator{\E}{E}
\DeclareMathOperator{\logit}{logit}
\DeclareMathOperator{\ARL}{ARL}
\DeclareMathOperator{\hit}{hit}
\newcommand{\N}{\mathbb{N}}
\newcommand{\R}{\mathbb{R}}
\newcommand{\condweakconv}[1]{\displaystyle{\mathop{\leadsto}^{P
    }_W}}
\newcommand{\xx}[1]{}
\newcommand{\cc}[1]{}
\newtheorem{remark}{Remark}
\newtheorem{definition}{Definition}
\newtheorem{lemma}{Lemma}
\newtheorem{theorem}{Theorem}
\newtheorem{algorithm}{Algorithm}
\title{Guaranteed Conditional Performance of Control Charts via Bootstrap Methods}
\author{}
    \author{\normalsize
      Axel Gandy\\
      \normalsize Departments of Mathematics, Imperial College London\\[0.5cm]
      \normalsize Jan Terje Kval\o y\\
      \normalsize Department of Mathematics and Natural Sciences,\\
      \normalsize University of Stavanger, Norway
    }
\date{}
\begin{document}

\begin{arxiv}
 \maketitle
\end{arxiv}

\begin{abstract}
  To use control charts in practice, the in-control state usually has
  to be estimated. This estimation has a  detrimental effect on the
  performance of control charts, which is often measured for example
  by the false alarm probability or the average run length.  We
  suggest an adjustment of the monitoring schemes to overcome these
  problems.  It guarantees, with a certain probability, a conditional
  performance given the estimated in-control state.  The suggested
  method is based on bootstrapping the data used to estimate the
  in-control state. The method applies to different types of control
  charts, and also works with charts based on regression models,
  survival models, etc.  If a nonparametric bootstrap is used, the
  method is robust to model errors. We show large sample properties of the
  adjustment.  The usefulness of our approach is demonstrated through
  simulation studies.
\end{abstract}

\begin{arxiv}
{\bf Key words:} Monitoring, CUSUM, bootstrap, guaranteed performance,
  confidence interval, control chart
\end{arxiv}

\section{Introduction}
Control charts such as the Shewhart chart \citep{Shewhart1931ECo} and
the cumulative sum (CUSUM) chart \citep{Page1954CIS} have been
valuable tools in many areas, including reliability
\citep{OConnor2002Pre,Xie2002Sec}, medicine
\citep{Carey2003Ihw,Lawson2005Sas,Woodall2006Tuo} and finance
\citep{Frisen2008FS}.  See \cite{Stoumbos2000SoS} and the special
issues of ``Sequential Analysis'' (2007, Volume 26, Issues 2,3) for an
overview.  Often, heterogeneity between observations is accounted for
by using risk-adjusted charts based on fitted regression models
\citep{Grigg2004oor,Horvath2004Mci,Gandy2010ram}.

A common convention in monitoring based on control charts is to assume
the probability distribution of in-control data to be known.  In
practice this usually means that the distribution is estimated based
on a sample of in-control data and the estimation error is ignored.
Examples of this are
\cite{Steiner2000Msp,Grigg2004oor,Bottle2008Iin,Biswas2008rCi,Fouladirad2008Otu,Sego2009Rmo,Gandy2010ram}.

However, the estimation error has a profound effect on the performance
of control charts. This has been mentioned at several places in the
literature, e.g.\ \cite{jones2004rld,Albers2004Esc,jensen2006epe,Stoumbos2000SoS,Champ2007PoM}.

To illustrate the effect of estimation, we consider a CUSUM chart
\citep{Page1954CIS} with normal observations and estimated in-control
mean. We observe a stream of independent random variables
$X_1,X_2,\ldots$ which in control have an $ N(\mu,1)$ distribution and
out of control have an $N(\mu+\Delta,1)$ distribution, where $\Delta>0$
is the shift in the mean.  The chart
switches from the in-control state to the out-of-control state at an
unknown time $\kappa$.  The unknown in-control mean $\mu$ is estimated
by the average $\hat\mu$ of $n$ past in-control observations
$X_{-n},\dots,X_{-1}$ (this is often called phase 1 of the monitoring;
the running of the chart is called phase 2). We consider the  CUSUM chart
$$S_t=\max(0, S_{t-1}+X_t-\hat \mu - \Delta/2), \quad S_0=0 $$
with hitting time $\tau=\inf\{t>0: S_t\geq c\}$ for some threshold
$c>0$.

\begin{figure}[tb]
  \centering
  \includegraphics[width=0.95\linewidth]{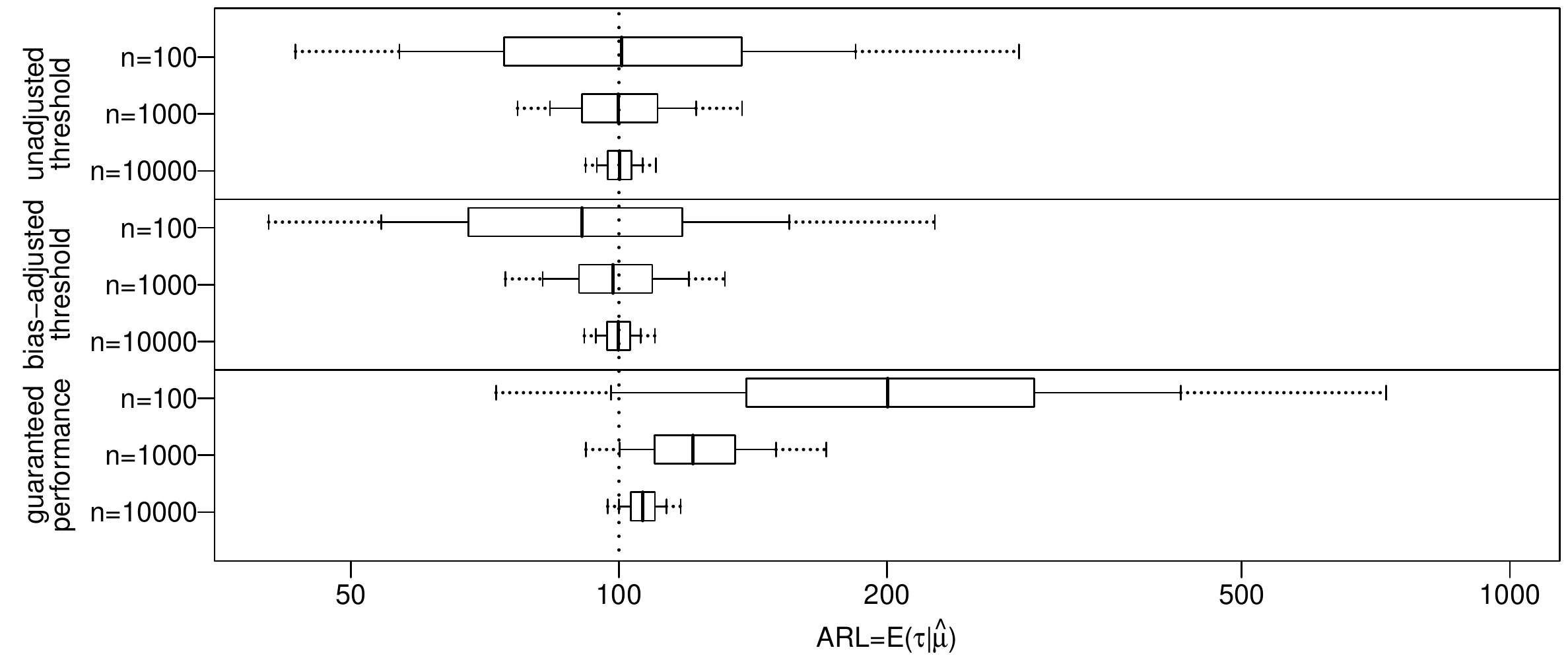}
  \caption{In-control distribution of ARL=$\E(\tau|\hat\mu)$ for
    CUSUMs for standard normally distributed data.  The mean $\hat
    \mu$ used in the monitoring is estimated based on $n$ past
    observations.  The boxplots show the 2.5\%, 10\%, 25\%, 50\%,
    75\%, 90\% and 97.5\% quantiles.The top part of the plot shows
    the situation when estimation error is ignored. In the middle part  the
    threshold has been chosen to give an unconditional ARL of 100
    (averaging out the parameter estimation). In the bottom part the threshold is
    adjusted to guarantee  with 90\% probability an in-control ARL of at least 100.  }
  \label{fig:estimerr}
\end{figure}

The in-control average run length,
$\ARL=\E(\tau|\hat\mu,\kappa=\infty)$, depends on $\hat \mu$ and is
thus a random quantity. The top part of the plot in
Figure~\ref{fig:estimerr} shows boxplots of its distributions with
threshold $c=2.84$, $\Delta=1$ and various numbers of past
observations. If  $\hat \mu=\mu$, i.e.\ $\mu$ was know,  this
would give an in-control $\ARL$ of 100. The estimation error is having a substantial effect on
the attained $\ARL$ even for  large samples such as
$n=1000$.
For further illustrations of the impact of estimation error see
\cite{jones2004rld} for CUSUM charts and  \cite{Albers2004Esc} for
Shewhart charts.

So far, no general approach for taking the estimation error into
account has been developed, but there are many special constructions
for specific situations.  For instance, for some charts so called
self-starting charts
\citep{Hawkins1987SCC,Hawkins1998csc,Sullivan2002SCC}, maximum
likelihood surveillance statistics
 to eliminate parameters
\cite[e.g.][]{Frisen2009}, correction
factors for thresholds
\citep{Albers2004Esc,jones2002statistical},
modified thresholds \citep{Zhang2011TSX} and threshold
functions \citep{Horvath2004Mci,Aue2006Cpm} have been
developed. Various bootstrap schemes for specific situations have also
been suggested, see for instance
\cite{Kirch2008BSC,Chatterjee2009Dcs,Capizzi2009Bdo,Huskova10Bsc}.
Further, some nonparametric charts which account for the estimation
error in past data have been proposed, see \cite{Chakraborti2007Ncc}
and references therein.   Recently some modified charts for monitoring
variance in the normal distribution with estimated parameters have
been suggested by \cite{Maravelakis2009AEC}  and
\cite{Castagliola2011ACC}.

When addressing estimation error, the above methods mainly  focus
on the performance of the charts averaged over both the estimation of
the in-control state as well as  running  the chart once.
In the middle part of Figure~\ref{fig:estimerr}, the threshold has
been chosen such that,  averaged over both the estimation of
the in-control state as well as  running  the chart once, the average
run length is $100$ (this results in a different threshold
for each $n$). It turns out that only a small change in the threshold
is needed and that the distribution of the conditional $\ARL=\E(\tau|\hat
\mu)$ is only changed slightly. This bias correction
for the $\ARL$ actually goes in the wrong direction in the sense that
it implies more short $\ARL$s. This is due to the $\ARL$ being
substantially influenced by the right tail of the run length
distribution, see the discussion in Section 2 of \cite{Albers2006SAC}.

However, usually, after the chart parameters are estimated, the
chart is run for some time without any reestimation of the in-control
state even if the chart signals. Moreover, in some situations, several
charts are run based on the same estimated parameters.
In these situations the ARL conditional on the estimated in-control
state is more relevant than the unconditional ARL. In the middle and
upper part of Figure \ref{fig:estimerr}, one sees that the conditional
ARL can be much lower than 100, meaning that both the unadjusted
threshold and the
 threshold adjusted for bias in the unconditional ARL
lead, with a substantial probability, to  charts
that have a considerably decreased time until false alarms.


To overcome these problems we will look at the performance of the
chart conditional on the estimated in-control distribution, averaging
only over different runs of the chart. This will lead to the
construction of charts that with high probability have an in-control
distribution with desired properties conditional on the observed past
data, thus reducing the situations in which there are many false
alarms due to estimation error.

The bottom part of Figure \ref{fig:estimerr}  shows the distribution of the in control
ARL when   the threshold for each set of past data is adjusted to
guarantee an in-control ARL of at least 100 with probability 90\%. The adjustment is
calculated using a bootstrap procedure explained later in the paper.
The adjustment succeeds to avoid the too low ARLs with the
prescribed probability, and we will see later that the cost in a
higher out-of-control ARL is modest.  Using hitting probabilities instead of ARL as
criterion leads to similar results.

\xx{talk about adjustments}

Our approach is similar in spirit to the exceedance probability
concept developed by Albers and Kallenberg for various types of
Shewhart \citep{Albers2004AEC,Albers2005Ncf,Albers2005EPF} and
negative binomial charts \citep{Albers2009CUM,Albers2010Toc}. They
calculate approximate adjusted thresholds such that there is only a
small prescribed probability that some performance measure, for
instance an ARL, will be a certain amount below or above a specified
target.

The main difference between their approach and what we present is that
our approach applies far more widely, to many different types of
charts and without having to derive specific approximation formulas in
each setting. If we apply a nonparametric bootstrap, the
proposed procedure will be robust against model misspecification.  In
addition to that, our approach allows not only to adjust the threshold
but also to give a confidence interval for the in-control performance
of a chart for a fixed threshold.  Lastly, even though not strongly
advocated in this paper, the bootstrap procedure we propose can also
be used to do a bias correction for the unconditional performance of
the chart, as in the middle part of Figure \ref{fig:estimerr}.


Next, we describe our approach more formally.
Suppose we want to use a monitoring scheme and that the in-control
distribution $P$ of the observations is unknown, but that
based on past in-control behaviour we have an estimate $\hat P$ of the
in-control distribution.  Let $q$ denote the in-control property of
the chart we want to compute, such as the $\ARL$, the false alarm
probability or the threshold needed for a certain $\ARL$ or false alarm
probability. In the above example we were interested to find a
threshold such that the in-control ARL is 100.

Generally, $q$ may depend on both the true in-control distribution $P$
and on estimated parameters of this distribution which for many charts
are needed to run the chart. We denote these parameters by $\hat
\xi=\xi(\hat P)$.
In the above CUSUM chart example $\hat \xi = \hat
\mu$.  We are interested in $q(P;\hat\xi)$, that
is the in-control performance of the chart conditional on the
estimated parameter. In the above CUSUM example,  $q(P;\hat\xi)$ is the threshold
needed to give an $\ARL$ of 100 if the observations are from the true
in-control distribution  $P$ and the
estimated parameter $\hat \mu$ is used. As $P$ is not observed $q(P;\hat
\xi)$ is not observable.  As mentioned above, many papers pretend
that the estimated in-control distribution $\hat P$ equals the true
in-control distribution $P$ and thus use $ q(\hat P;\hat \xi)$.
Our suggestion is to use bootstrapping of past data to construct an
approximate one-sided confidence intervals for $q(P;\hat \xi)$. From
this we get a  guaranteed conditional performance of the control
scheme.



In Section~\ref{sec:monitorhomobs} we present the general idea in the
setting with homogeneous observations, and discuss this for Shewhart
and CUSUM charts.  The main theoretical results are presented in
Section~\ref{sec:gentheor}, with most of the proofs given in the
Appendix.  Section \ref{sec:simulsingle} contains simulations
illustrating the performance of charts for homogeneous observations.
In Section~\ref{sec:regmod} extensions to charts based on regression
and survival analysis models are presented. Some concluding comments
are given in Section~\ref{sec:conclusion}.  The suggested methods are
implemented in a flexible R-package, that will be made available on the
Comprehensive R Archive Network (CRAN).

\section{Monitoring homogeneous observations}
\label{sec:monitorhomobs}
\subsection{General idea}
\label{subsec:generalidea}

Suppose that in control we have independent observations
$X_1,X_2,\dots$ following an unknown distribution $P$.  We want to use
some monitoring scheme/control chart that detects when $X_{i}$ is no
longer coming from $P$. The particular examples we discuss in
this paper are Shewhart and CUSUM charts, but the methodology we
suggest applies more widely.

To run the charts, one often needs certain parameters $\xi$. For
example, in the CUSUM control chart of the introduction, we
needed $
\xi= \mu$, the assumed in-control mean.  These parameters will usually
be estimated.

Let $\tau$ denote the time at which the chart signals a change. As $\tau$ may
depend on $\xi$, we sometimes write $\tau(\xi)$.  The
charts we consider use a threshold $c$, which determines how quickly
the chart signals (larger $c$ lead to a later signal).

The performance of such a control chart with the in-control
distribution $P$ and the parameters $ \xi$ can, for example, be
expressed as one of the following.
\begin{itemize}%
\setlength{\itemsep}{0pt}%
\setlength{\parskip}{0pt}%
\item $\ARL(P;\xi)=\E(\tau( \xi))$, where $\E$ is the expectation with respect to $P$.
\item ${\hit}(P;\xi)=\Prob(\tau( \xi)\leq T)$ for
  some finite $T>0$, where $\Prob$ is the probability measure under which $X_1,X_2,\dots\sim P$. This is the false alarm probability in $T$ time
  units.\xx{do we want to distinguish between $\Prob$ and $P$?}
\item $c_{\ARL}(P;\xi)=\inf\{c>0:\ARL(P;\xi)\geq\gamma\}$ for some
  $\gamma>0$. Assuming appropriate continuity, this is the threshold
  needed to give an in-control average run length of $\gamma$.
\item $c_{\hit}(P;\xi)=\inf\{c>0:\hit(P;\xi)\leq\beta\}$ for
  some $0<\beta<1$. This is the threshold needed
  to give a false alarm probability of $\beta$.
\end{itemize}
The latter two quantities are very important in practice, as they are
needed to decide which threshold to use to run a chart.  In the
notation we have suppressed the dependence of the quantities on $c$,
$T$, $\gamma$, $\beta$ and $\Delta$.

In the following, $q$ will denote one of $\ARL$, $\hit$, $c_{\ARL}$
or $c_{\hit}$, or simple transformations such as $\log(\ARL)$,
$\logit(\hit)$, $\log(c_{\ARL})$ and $\log(c_{\hit})$, where
$\logit(x)=\log\left(\frac{x}{1-x}\right)$.

The true in-control distribution $P$ and the parameters $\xi=\xi(P)$
needed to run the chart are usually estimated.  We assume that we have
past in-control observations $X_{-n},\dots,X_{-1}$ (independent of $X_1,X_2,\dots$), which
we use to  estimate the in-control distribution $P$ parametrically or
non-parametrically. We denote this estimate by $\hat P$.  The estimate
of $\xi$ will be denoted by $\hat \xi=\xi(\hat P)$. For example, in
the CUSUM control chart of the introduction, $\hat \xi=\hat \mu$ is
the estimated in-control mean.

The observed performance of the chart will depend on the true
in-control distribution $P$ as well as on the estimated parameters
$\hat \xi$ that are used to run the chart.  Thus we are interested in
$q(P;\hat\xi)$, the performance of the control chart
\emph{conditional} on $\hat \xi$.  This is an unknown quantity as $P$
is not known.  Based on the estimator $q(\hat P;\hat\xi)$, we
construct a one-sided confidence interval for this quantity to
guarantee, with high probability, a certain performance for the
chart. We choose to call the interval a  confidence interval,
even though the quantity $q(P;\hat\xi)$ is random.

We suggest the following for guaranteeing an upper bound on $q$ (which
is relevant for  $q=\hit$, $q=c_{\ARL}$ or $q=c_{\hit}$).  For $\alpha\in (0,1)$,
let $p_\alpha$ be a constant such that
$$
\Prob(q(\hat P;\hat\xi) - q(P;\hat\xi) >p_\alpha)=1-\alpha,
$$
assuming that such a $p_{\alpha}$ exists.
Hence,
$$
\Prob(q(P;\hat\xi)< q(\hat P;\hat\xi)-p_{\alpha})= 1-\alpha.
$$
Thus $(-\infty,q(\hat P;\hat\xi)-p_{\alpha})$ could be considered an
exact lower one-sided confidence interval of $q(P;\hat\xi)$.
\xx{Or,
 tolerance interval?}
\cc{This is not really a confidence interval in the classical sense -
 $q( P;\hat\xi)$ is an
  unobserved random quantity... and not just a fixed parameter.
The article  Weerahandi (1993, JASA) ``Generalized Confidence Intervals'' might be a useful reference. }

Of course, $p_{\alpha}$ is unknown.  We suggest to obtain an
approximation of $p_{\alpha}$ via bootstrapping.  In the following,
$\hat P^\ast$ denotes a parametric or non-parametric bootstrap
replicate of the estimated in-control distribution $\hat P$.
We can approximate $p_\alpha$ by $p^\ast_\alpha$ such that
$$\Prob(q(\hat P^\ast;\hat\xi^\ast)-q(\hat P;\hat\xi^\ast)> p^\ast_\alpha|\hat P)=1-\alpha.$$
\cc{alternatively we could use $\hat P(q(\hat P^\ast;\hat\xi^\ast)-q(\hat P;\hat\xi^\ast)> p^\ast_\alpha)=1-\alpha.$
}
Thus
\begin{equation}
  \label{eq:onesidedapproxconfint}
 (-\infty,q(\hat P;\hat\xi)-p^\ast_\alpha)
\end{equation}
is a one-sided (approximate) confidence interval for $q(P;\hat\xi)$.
In this paper, we will use the following generic algorithm to
implement the bootstrap.
\begin{algorithm}[Bootstrap]
  \label{alg:Bootstrap}
\hspace*{2mm}\\[-7mm]
\begin{enumerate}%
\setlength{\itemsep}{0pt}%
\setlength{\parskip}{0pt}%
\item From the past data $X_{-n},\dots,X_{-1}$, estimate
  $\hat P$ and  $\hat\xi$.
\item Generate bootstrap samples $X^{\ast}_{-n},\dots,X^{\ast}_{-1}$
  from $\hat P$.  Compute the corresponding estimate $\hat P^{\ast}$
  and $\hat \xi^{\ast}$. Repeat $B$ times to get $\hat
  P^{\ast}_1,\dots,\hat P^{\ast}_B$ and $\hat \xi^{\ast}_1,\dots,\hat
  \xi^{\ast}_B$.
\item Let $p_{\alpha}^{\ast}$ be the $1-\alpha$ empirical quantile of
  $q(\hat P^{\ast}_b;\hat\xi^{\ast}_b)-q(\hat P;\hat\xi^{\ast}_b)$, $b=1,\dots,B$.
\end{enumerate}
\end{algorithm}

For guaranteeing a lower bound on $q$, which is for example relevant
for $q=\ARL$, a similar upper one-sided confidence interval can be
constructed.

In a practical situation, the focus would be on deciding which
threshold to use for the control chart to obtain desired in-control
properties. We suggest to use either $q=c_{\ARL}$ or $q=c_{\hit}$, or
log transforms of these, and
then run the chart with the adjusted threshold
\begin{equation}
  \label{eq:adjThreshold}
q(\hat P;\hat \xi)-p^{\ast}_{\alpha}.
\end{equation}
This will guarantee that in (approximately) $1-\alpha$ of the
applications of this method, the control chart actually has the
desired in-control properties.

\cc{The following are some comments which are probably not quite relevant to practice.
In some application of control charts the chart parameters $\xi$ are
not estimated but determined according to certain specifications the
process should meet. A typical example would be industrial
applications like monitoring of properties of mass produced units
where there are precise specification of physical properties of the
units which should be monitored. Then $\xi$ may be determined
according to these specifications, and the point of the monitoring is
to detect deviations from the specifications.  However, the full
in-control distribution $P$ would usually still be unknown and our
approach would still apply for constructing confidence intervals for
$q(P;\xi_s)$ where $\xi_s$ denotes a specified $\xi$.

Would this in practice be relevant? Or would one also specify $P$,
  or at least parts of $P$ like the  mean?  Could using our approach
  here e.g.\ lead to picking a far too   large $c$ to get a guaranteed
  ARL if $P$ actually is far off from   where it ``should be''?
}

\subsection{Specific charts}

\subsubsection{Shewhart charts}
\label{subsubsec:Shewhart}

The one-sided Shewhart chart \citep{Shewhart1931ECo} signals at
$$
\tau=\inf\{t \in \{1,2,\dots\}: f(X_t,\xi)>c\}
$$
for some threshold $c$, where  $f$ is some function, $X_t$ is the
observation at time $t$ and $\xi$ are
some parameters.
 $X_t$ can be a single
measurement or e.g.\ the average, range or standard deviation of a
specified number of measurements, or some other statistic like a
proportion.
It is common to use a Shewhart chart with a threshold of
the mean plus 3 times the standard deviation, in
this case one would use $c=3$ and   $f(x,\xi)=\frac{x-\xi_1}{\xi_2}$
with $\xi_1$ being the mean and $\xi_2$ being the standard deviation.
 For two-sided charts one could just use  $f(x,\xi)=\frac{|x-\xi_1|}{\xi_2}$.

Conditionally on fixed parameters  $ \xi$, the stopping time $\tau$ follows a
geometric distribution with parameter
$p=p(c;P,\xi)=\Prob(f(X_t,\xi)>c)$.
Then the performance measures mentioned in the previous section simplify to
\begin{align*}
\ARL(P;\xi)=&\frac{1}{p(c;P,\xi)},& \hit(P;\xi)=&1-(1-p(c;P,\xi))^T,\\
 c_{\ARL}(P;\xi)=&p^{-1}\left( \frac{1}{\gamma}
   ;P,\xi\right)
\;\;\;\;\;\;
\text{ and}&
 c_{\hit}(P;\xi)=&p^{-1}\left( 1-(1-
    \beta)^{\frac{1}{T}};P,\xi\right),
\end{align*}
 where $p^{-1}(\cdot;P,\xi)$ is the inverse of $p(\cdot;P,\xi)$.
\cc{To get the formula for $c_{\hit}$ set
    $\beta=\hit$ in the second item and solve for $c$.}

Suppose that the in-control distribution comes from a parametric
family $P_{\theta}, \theta\in \Theta$.  Furthermore, suppose that we
have some way of computing an estimate $\hat\theta$ of $\theta$ based on the
sample.
Then we can use  Algorithm \ref{alg:Bootstrap} with $\hat P=P_{\hat \theta}$ to compute a
confidence interval as given by (\ref{eq:onesidedapproxconfint}).

Shewhart charts depend heavily on the tail behaviour of the
distribution of the observations. This is particularly problematic
when the sample size is small and we use non-parametric methods or a
simple non-parametric bootstrap. We thus primarily suggest to use a
parametric bootstrap for Shewhart charts.

\begin{remark}
In certain cases the parametric bootstrap will actually be exact when
$B \to \infty$. This happens when the distribution of
$q(P_{\hat\theta};\hat\xi) - q(P_{\theta};\hat\xi)$ under $P_{\theta}$
does not depend on $\theta$. In particular, this implies that
$q(P_{\hat\theta^\ast};\hat\xi^\ast) - q(P_{\hat\theta};\hat\xi^\ast)$
has the same distribution and $p^\ast_\alpha\to p_\alpha$ as $B\to
\infty$.

 As an example, consider the case when
 $f(x,\xi)=\frac{x-\xi_1}{\xi_2}$ and $X_t$ follows an
 $N(\xi_1,\xi_2^2)$ distribution and $q$ is any of the performance
 measures described above. We use $\theta=\xi$ and as estimator $\hat
 \xi_1$ we use the sample mean and as estimator $\hat\xi_2$ we use the
 sample standard deviation. Then
\begin{align*}
p(c;P_{\!\xi},\hat\xi)
=\Prob_{\!\xi}\!\left(\frac{X_t-\hat\xi_1}{\hat\xi_2}> c\right)
=1-\Phi\left(\frac{c\hat\xi_2+\hat\xi_1-\xi_1}{\xi_2}\right),
\end{align*}
where $\Phi$ is the cdf of the standard normal distribution, and
under $P_\xi$,
$$
\frac{c\hat\xi_2+\hat\xi_1-\xi_1}{\xi_2}=c\frac{\hat\xi_2}{\xi_2}+\frac{\hat\xi_1-\xi_1}{\xi_2}\sim\frac{c}{\sqrt{n-1}}\sqrt{W}+\frac{1}{\sqrt{n}}Z,
$$
where $W\sim\chi_{n-1}^2$ and $Z\sim N(0,1)$ are independent. Thus the
distribution of $p(c;P_\xi,\hat\xi)$, and hence $q(P_{\xi};\hat\xi)$,
is completely known.  As $
p(c;P_{\!\hat\xi},\hat\xi)=\Prob_{\!\hat\xi}\left(\frac{X_t-\hat\xi_1}{\hat\xi_2}>
  c\right) =1-\Phi(c)$, and thus $q(P_{\!\hat\xi};\hat\xi)$, is not
random, the distribution of $q(P_{\!\hat\xi};\hat\xi) -
q(P_{\xi};\hat\xi)$ also does not depend on any unknown
parameters. Thus the parametric bootstrap is exact in this example.
\end{remark}

\subsubsection{CUSUM charts}
\label{subsubsec:CUSUM}
This section considers the one-sided CUSUM chart \citep{Page1954CIS}.
The classical CUSUM chart was designed to detect a shift of size
$\Delta>0$ in the mean of normally distributed observations. Let $\mu$ and $\sigma$
denote, respectively, the in-control mean and standard
deviation. A CUSUM chart can be defined by
\begin{equation}
  \label{eq:discrCUSUM_meanshift}
S_t=\max(0, S_{t-1}+(X_t-\mu - \Delta/2)/\sigma), \quad S_0=0
\end{equation}
with hitting time $\tau=\inf\{t>0: S_t\geq c\}$ for some threshold
$c>0$.

Alternatively, we could drop the scaling and not divide by the
 standard deviation $\sigma$ in
(\ref{eq:discrCUSUM_meanshift}). See Chapter 1.4 in
\cite{Hawkins1998csc} for a discussion on scaled versus unscaled
CUSUMs.

More generally, to accommodate observations with general in-control distribution with
density $f_0$ and general out-of-control distribution with density $f_{1}$, it
is optimal in a certain sense \citep{Moustakides1986OST} to modify the
CUSUM chart by replacing $(X_t-\mu - \Delta/2)/\sigma$ by the log
likelihood ratio $\log(f_1(X_t,\theta)/f_0(X_t, \theta))$ such
that the CUSUM chart is
\begin{equation}
  \label{eq:discrCUSUM_loglikelihood}
S_t=\max(0, S_{t-1}+\log(f_1(X_t,\theta)/f_0(X_t, \theta))), \quad S_0=0.
\end{equation}

Let $\xi$ denote either $(\mu,\sigma)$ in
(\ref{eq:discrCUSUM_meanshift}) or $\theta$ in
(\ref{eq:discrCUSUM_loglikelihood}).  Usually, $\xi$ needs to be
estimated from past data, and we can then use Algorithm
\ref{alg:Bootstrap} to compute a confidence interval
(\ref{eq:onesidedapproxconfint}) for the performance measure
$q(P;\hat\xi)$. For (\ref{eq:discrCUSUM_loglikelihood}) it is most
natural to use a parametric bootstrap with $\hat P=P_{\hat \theta}$,
while for (\ref{eq:discrCUSUM_meanshift}) we can use either a
parametric or a nonparametric bootstrap. In the latter case we let
$\hat P$ be the empirical distribution of $X_{-n},\dots,X_{-1}$, i.e.
in Algorithm \ref{alg:Bootstrap}, $X^{\ast}_{-n},\dots,X^{\ast}_{-1}$
are sampled with replacement from $X_{-n},\dots,X_{-1}$.

\begin{remark}
Similar as for Shewhart charts, this parametric bootstrap is exact
when the distribution of
$q(P_{\hat\theta};\hat\xi)-q(P_{\theta};\hat\xi)$ does not have any
unknown parameters.  This is, for instance, the case if we use
(\ref{eq:discrCUSUM_loglikelihood}) for an exponential distribution
with the out-of-control distribution specified as an exponential
distribution with mean $\Delta\lambda$, where $\lambda$ is the
in-control mean. Another example of this is when we have normally
distributed data and use a CUSUM with the increments
$(X_t-\hat\mu)/\hat\sigma-\Delta/2$.
\end{remark}

\section{General theory}
\label{sec:gentheor}

In this section, we show that asymptotically, as the number of past
observations $n$ increases, our procedure works.  An established way
of showing asymptotic properties of bootstrap procedures is via a
functional delta method \citep{Vaart1996WCa,Kosorok2008ItE}. Whilst we
will follow a similar route, our problem does not fit directly into
the standard framework, because the quantity of interest, $q(P,\hat
\xi)$, contains the random variable $\hat \xi$.
We present the setup and
the main result in Section \ref{sec:th:main}, followed by examples
(Section \ref{sec:th:examples}).
\cc{The asymptotic development in this section only show that things
  do not go badly wrong as $n\to \infty$. They only establish
  consistency of the correction/confidence intervals.  However, the need to use
  these confidence intervals disappears as $n$ increases.}

\subsection{Main theorem}
\label{sec:th:main}
Let $D_q$ be the set in which  $P$ and its
estimator $\hat P$ lie, i.e.\ a set describing the potential
probability distribution of our observations. This could be a subset
of $\R^d$ for parametric distributions, the set of cumulative
distribution functions for non-parametric situations, or the set of
joint distributions of covariates and observations.  We assume that
$D_q$ is a subset of a complete normed vector space $D$. \cc{Do we
  want to /need to assume that $D$ is complete (every Cauchy sequence
  converges)?  This should not be a problem as $R^k$ and
  $l_{\infty}(\R)$ are complete metric spaces. } Let $\Xi$ be a
non-empty topological space containing the potential parameters $\xi$
used for running the chart. In our examples, we will let $\Xi\subset
\R^d$ be an open set.

We assume that $\hat P^{\ast}=\hat P^{\ast}(\hat P, W_n)$ is a
bootstrapped version of $\hat P$ based both on the observed data $\hat P$ and
on an independent random vector $W_n$.  For example, when resampling
with replacement then $W_n$ is a weight vector of length $n$,
multinomially distributed, that determines how often a given
observation is resampled.  In a parametric bootstrap, $W_n$ is the
vector of random variables needed to generate observations from the
estimated parametric distribution.

In the main theorem we will need that the mapping $q:D_q\times
\Xi\to\R$, which returns the property of the chart we are interested
in, satisfies the following extension of Hadamard differentiability.
For the usual definition of Hadamard differentiability see e.g.\
\citep[Section 20.2]{Vaart1998AS}. The extension essentially consists in
requiring Hadamard differentiability in the first component when the second
component is converging.
\begin{definition}
\label{def:haddiffamily}
Let $D,E$ be metric spaces, let $D_f\subset D$ and let $\Xi$ be a
non-empty topological space. \cc{We need at least to be able to speak
  about convergence in $\Xi$.}  The family of functions
$\{f(\cdot;\xi):D_f \to E: \xi\in \Xi\}$ is called \emph{Hadamard
  differentiable at $\theta\in D_f$ around $\xi \in \Xi$ tangentially
  to $D_0\subset D$}  if there exists a continuous linear map \cc{this is a requirement that also appears in the original definition and which we may be using in our proofs; however, we never prove for our derivatives that they are continuous and linear}
$f'(\theta;\xi):D_0\to E$ such that
$$
\frac{f(\theta+t_nh_n;\xi_n)-f(\theta;\xi_n)}{t_n}\to
f'(\theta;\xi)(h)\quad(n\to\infty)
$$
for all sequences $(\xi_n)\subset \Xi$, $(t_n)\subset \R$, $(h_n)\subset D$
that satisfy $\theta+t_nh_n\in D_f \,\forall n$ and $\xi_n\to \xi$, $t_n\to 0$, $h_n\to h\in D_0$ as $n\to \infty$.
\end{definition}

In the following theorem we understand convergence in distribution, denoted by $\leadsto$, as defined
in \citet[Def 1.3.3]{Vaart1996WCa} or in \citet[p.108]{Kosorok2008ItE}.
\cc{ Let
  $(\Omega_n, {\cal A}_n, P_n)$ be a sequence of probability spaces,
  let $(\Omega, \cal A, P)$ be a further probability space, let $D$ be
  a metric space and let $X_n:\Omega_n\to D$ be a sequence of maps and
  let $X:\Omega\to D$ be a Borel measurable map. Then $X_n\leadsto X$
  if $\E^{\ast}f(X_n) \to \E f(X)$ for all continuous, bounded $f:D\to
  \R$.  }
\cc{Outer expectation is defined in \cite{Vaart1996WCa} and
  in \cite{Kosorok2008ItE}, essentially $\E^{\ast}X = \inf \{\E Y:
  Y\geq X, Y \text{ measurable}\}$.}
\begin{theorem}
\label{th:main}
  Let $q:D_q\times \Xi\to \R$ be a mapping, let $P\in D_q$ and let
  $\xi:D_q\to \Xi$ be a continuous function.
  Suppose that the following conditions are satisfied.
\begin{itemize}%
\setlength{\itemsep}{0pt}%
\setlength{\parskip}{0pt}%
\item[a)] $q$ is Hadamard differentiable at $P$ around $\xi$ tangentially to $D_0$ for some $D_0\subset D$.
\item[b)] $\hat P$ is a sequence of random elements in $D_q$ such that
$
  \sqrt{n}(\hat P-P)\leadsto Z
$ as $n\to \infty$
where $Z$ is some tight random element in $D_0$.
\item[c)]
$\sqrt{n}(\hat P^{\ast}-\hat P)\condweakconv{\hat P} Z$ as $n\to\infty$
where $\condweakconv{\hat P}$ denotes weak convergence conditionally on $\hat
P$ in probability as defined in \citet[p.19]{Kosorok2008ItE}.  \cc{
  i.e. $\sup_{h\in \text{BL}_1}|E_Wh(\hat X_n) - E
  h(X)|\stackrel{P}{\to}0$ and $E_Wh(\hat X_n)^{\ast}-E_Wh(\hat
  X_n)_{\ast}\stackrel{P}{\to}0$ for all $f\in \text{BL}_1$ where the
  subscript $W$ denotes conditional expectation over the weights given
  the remaining data.  }
\item[d)] The cumulative distribution function  of  $q'(P;\xi)Z$ is continuous.
\item[e)] Outer-almost surely, the map $W_n\mapsto h(\hat P^{\ast}(\hat P, W_n))$ is measurable for each $n$ and for every continuous bounded function $h:D_q\to \R$.
\item[f)] $q(\hat P; \hat \xi)-q(P;\hat \xi)$ and  $p_\alpha^{\ast}$ are  random variables, i.e.\ measurable,
where $\hat \xi =\xi(\hat P)$ and $p^{\ast}_{\alpha}=\inf\{t\in \R:
\hat \Prob(q(\hat P^\ast;\hat\xi^\ast)-q(\hat P;\hat\xi^\ast)\leq t)\geq
\alpha\}$.
\end{itemize}
Then
\begin{equation*}
\Prob(q(P;\hat\xi)\in (-\infty,  q(\hat P;\hat\xi)-p^{\ast}_{\alpha}))\to  1-\alpha \quad (n\to \infty).
\end{equation*}
\end{theorem}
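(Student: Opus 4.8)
\medskip
\noindent\emph{Proof (sketch).}
At bottom the claim is that the bootstrap consistently estimates the distribution of the error $q(\hat P;\hat\xi)-q(P;\hat\xi)$ on the $\sqrt n$ scale. The plan is therefore: (i) an extended functional delta method for $\sqrt n\big(q(\hat P;\hat\xi)-q(P;\hat\xi)\big)$; (ii) its conditional, bootstrap counterpart; (iii) convergence of the bootstrap quantile plus a Slutsky argument to read off the coverage. Throughout, note that $q(P;\hat\xi)\in(-\infty,q(\hat P;\hat\xi)-p^\ast_\alpha)$ is equivalent to $A_n:=\sqrt n\big(q(\hat P;\hat\xi)-q(P;\hat\xi)\big)>\sqrt n\,p^\ast_\alpha$, so only these two quantities enter.

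For (i): since $\sqrt n(\hat P-P)\leadsto Z$ with $Z$ tight, $\hat P-P\stackrel{P}{\to}0$, hence $\hat\xi=\xi(\hat P)\stackrel{P}{\to}\xi$ by continuity of $\xi$, so $\big(\sqrt n(\hat P-P),\hat\xi\big)\leadsto(Z,\xi)$ jointly. Write $\hat P=P+n^{-1/2}\,\sqrt n(\hat P-P)$; then Definition~\ref{def:haddiffamily}, applied with $t_n=n^{-1/2}$, says precisely that the maps $(h,\eta)\mapsto\sqrt n\big(q(P+n^{-1/2}h;\eta)-q(P;\eta)\big)$ converge, along every sequence $(h_n,\eta_n)\to(h,\xi)$ with $h\in D_0$, to $q'(P;\xi)(h)$. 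Feeding this together with the weak limit in (b) into the extended continuous mapping theorem \citep[Sec.~1.11]{Vaart1996WCa} (its measurability hypothesis supplied by (f)) gives
\[
A_n=\sqrt n\big(q(\hat P;\hat\xi)-q(P;\hat\xi)\big)\;\leadsto\;q'(P;\xi)Z=:L ,
\]
whose cdf $F_L$ is continuous by (d).

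Step (ii) is the heart of the matter: conditionally on $\hat P$, in probability,
\[
\sqrt n\big(q(\hat P^\ast;\hat\xi^\ast)-q(\hat P;\hat\xi^\ast)\big)\;\condweakconv{\hat P}\;L .
\]
I would obtain this by writing the left side as $\sqrt n\big(q(\hat P^\ast;\hat\xi^\ast)-q(P;\hat\xi^\ast)\big)-\sqrt n\big(q(\hat P;\hat\xi^\ast)-q(P;\hat\xi^\ast)\big)$ and expanding each term around $P$ via Definition~\ref{def:haddiffamily}: since $\hat\xi^\ast=\xi(\hat P^\ast)\stackrel{P}{\to}\xi$ and, conditionally on $\hat P$, $\sqrt n(\hat P^\ast-\hat P)\condweakconv{\hat P}Z\in D_0$ by (c), the two expansions reduce, up to conditionally negligible remainders, to $q'(P;\xi)\big(\sqrt n(\hat P-P)+\sqrt n(\hat P^\ast-\hat P)\big)$ and $q'(P;\xi)\big(\sqrt n(\hat P-P)\big)$, whose difference is $q'(P;\xi)\big(\sqrt n(\hat P^\ast-\hat P)\big)\condweakconv{\hat P}q'(P;\xi)Z=L$. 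Here linearity of $q'(P;\xi)$ is exactly what cancels the moving base point $\sqrt n(\hat P-P)$, the measurability hypothesis (e) is what makes the conditional convergence meaningful, and the extended continuous mapping theorem is again the vehicle. I expect this to be the main obstacle: the textbook bootstrap delta method does not apply directly because $q$ is not assumed Hadamard differentiable at the moving base point $\hat P$ and because it carries the extra random argument $\xi^\ast$; making the remainders vanish conditionally with only the extended differentiability of Definition~\ref{def:haddiffamily} in hand requires care (and is presumably why one wants $D_0$ to be a linear space on which $q'(P;\xi)$ is continuous and linear).

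For (iii), let $G_n$ denote the conditional cdf of $\sqrt n\big(q(\hat P^\ast;\hat\xi^\ast)-q(\hat P;\hat\xi^\ast)\big)$ given $\hat P$; by the formula for $p^\ast_\alpha$ in (f), $\sqrt n\,p^\ast_\alpha=G_n^{-1}(\alpha)$. Conditional weak convergence in probability yields $G_n(x)\stackrel{P}{\to}F_L(x)$ at every $x$ (sandwich indicators between bounded Lipschitz functions, then use continuity of $F_L$). Let $[\ell,r]$ be the nonempty closed set of $\alpha$-quantiles of $L$; then $F_L(\ell-\varepsilon)<\alpha<F_L(r+\varepsilon)$ for every $\varepsilon>0$, which forces $\ell-\varepsilon\le G_n^{-1}(\alpha)\le r+\varepsilon$ with probability tending to $1$. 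Hence, with $\ell_\alpha\in[\ell,r]$ so that $F_L(\ell_\alpha)=\alpha$,
\[
\Prob\big(A_n>r+\varepsilon\big)-o(1)\;\le\;\Prob\big(A_n>G_n^{-1}(\alpha)\big)\;\le\;\Prob\big(A_n>\ell-\varepsilon\big)+o(1),
\]
and letting $n\to\infty$ (using $A_n\leadsto L$ and continuity of $F_L$ at $r+\varepsilon$, $\ell-\varepsilon$) and then $\varepsilon\downarrow0$, both sides tend to $1-F_L(\ell_\alpha)=1-\alpha$. Since $\Prob\big(A_n>G_n^{-1}(\alpha)\big)=\Prob\big(q(P;\hat\xi)\in(-\infty,q(\hat P;\hat\xi)-p^\ast_\alpha)\big)$, this is the assertion.
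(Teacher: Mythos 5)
Your three-step plan is exactly the paper's route: an extended delta method for $\sqrt n\bigl(q(\hat P;\hat\xi)-q(P;\hat\xi)\bigr)$ (the paper's Lemma~\ref{le:ext_functional_Delta_method}), a conditional bootstrap analogue of it, and then convergence of the bootstrap quantile plus Slutsky to read off the coverage. Your step (iii) is a legitimate minor variant: instead of the paper's subsequence argument with the quantile-convergence lemma \citep[Lemma 21.2, Lemma 23.3]{Vaart1998AS} (which then needs the extra countability/monotonicity argument to handle discontinuity points of $F_G^{-1}$), you sandwich $G_n^{-1}(\alpha)$ directly between $\ell-\varepsilon$ and $r+\varepsilon$ using pointwise convergence in probability of $G_n$ and continuity of $F_L$; this is sound and if anything slightly more self-contained for the one-dimensional quantile step.

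The one place where your sketch has a real hole is the one you yourself flag, step (ii), and the phrase ``up to conditionally negligible remainders'' does not close it. Conditionally on $\hat P$, the quantity $\sqrt n(\hat P-P)$ is a fixed, non-convergent sequence, so you cannot expand $q(\hat P^{\ast};\hat\xi^{\ast})-q(P;\hat\xi^{\ast})$ around $P$ ``given $\hat P$'' and invoke Definition~\ref{def:haddiffamily} directly: the Hadamard differentiability only delivers deterministic sequential limits, which must be fed through the extended continuous mapping theorem at the level of the \emph{unconditional} joint law, and the resulting statement is unconditional. The paper closes this gap in two moves: first it establishes the unconditional joint convergence of $\sqrt n(\hat P^{\ast}-P,\hat P-P)$ to $(\tilde Z_1+\tilde Z_2,\tilde Z_2)$ with $\tilde Z_1,\tilde Z_2$ independent copies of $Z$ (first part of the proof of Theorem 12.1 in \citealp{Kosorok2008ItE}, using b), c) and e)), then applies Lemma~\ref{le:ext_functional_Delta_method} to the vector-valued map $(x,y;\xi)\mapsto(q(x;\xi),q(y;\xi),x,y)$ with $\xi(x,y)=\xi(x)$ — which is exactly your cancellation-by-linearity, but executed unconditionally — and finally repeats the second half of Kosorok's Theorem 12.1 argument to convert the joint unconditional limit into the conditional-in-probability statement $\sqrt n\bigl(q(\hat P^{\ast};\hat\xi^{\ast})-q(\hat P;\hat\xi^{\ast})\bigr)\condweakconv{\hat P}q'(P;\xi)Z$. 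So your decomposition is the right one, but the passage from unconditional expansions to conditional weak convergence is a genuine step that needs this (or an equivalent almost-sure representation/bounded-Lipschitz-metric) argument, not just a remark that the remainders vanish.
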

A similar result  holds for upper confidence intervals.

The proof is in Appendix \ref{sec:proof}.  The theorem essentially is
an extension of the delta-method.  Condition a) ensures the necessary
differentiability.  Conditions b) and c) are standard assumptions for
the functional delta method; b) for the ordinary delta method and c)
for the bootstrap version of it. Condition d) ensures that, after
using an extension of the delta-method, the resulting confidence
interval will have the correct asymptotic coverage probability.
Condition e) is a technical measurability condition, which will be
satisfied in our examples. Condition f) is a measurability condition,
which should usually be satisfied.

\subsection{Examples}
\label{sec:th:examples}
The following sections give examples in which Theorem \ref{th:main}
applies.  We consider hitting probabilities ($q=\hit$) and thresholds to
obtain certain hitting probabilities ($q=c_{\hit}$).

These examples are
meant to be illustrative rather than exhaustive. For example, other
parametric setups could be considered along similar lines to Section
\ref{sec:cusum-charts-with}. Furthermore, other performance measures such as
$\log(c_{\hit})$ or $\logit(\hit)$ would essentially require application of chain rules
to show differentiability.

\subsubsection{Simple nonparametric setup for CUSUM charts}
\label{sec:theor:ex:CUSUM:nonpar}
We show how the above theorem applies to the CUSUM chart described in
(\ref{eq:discrCUSUM_meanshift}) when using a non-parametric bootstrap
version of Algorithm \ref{alg:Bootstrap}.

Let $D=l_{\infty}(\R)$ be the set of bounded functions $\R\to\R$
equipped with the sup-norm $\|x\|=\sup_{t\in \R}|x_t|$. \cc{This is a
  Banach space, i.e. a complete normed vector space}
Let $D_q\subset D$ be the set of cumulative distribution functions on
$\R$ with finite second moment.
The parameters needed to run the chart are the mean and the standard deviation of the in-control observations, thus we may choose
 $\Xi=\R\times(0,\infty)$ and  $\xi:D_q\to \Xi, P\mapsto (\int x P(dx),
\int x^2 P(dx)-(\int x P(dx))^2)$.

As quantities $q$ of interest we are considering hitting probabilities
($q=\hit$) and thresholds ($q=c_{\hit}$) needed to achieve a certain hitting
probability.  The probability $\hit:D_q\times \Xi\to \R$ of hitting a
threshold $c>0$ up to step $T>0$ can be written as
$\hit(P;\xi)=\Prob(m(Y) \geq c)$, where
$m(Y)=\max_{i=1,\dots,T}R_i(Y)$ is the maximum value of the chart up
to time $T$,
$R_i(Y)=\sum_{j=1}^iY_j-\min_{0\leq k\leq i}\sum_{j=1}^kY_j$ is the
value of the CUSUM chart at time $i$, $Y=(Y_1,\dots,Y_T)$,
$Y_t=\frac{X_t-\xi_1-\Delta/2}{\xi_2}$ and $X_1,\dots,X_T \sim P$ are the
independent observations.  The threshold needed to achieve a certain hitting
probability $\beta \in (0,1)$ is $c_{\hit}: D_q\times \Xi\to \R$,
$c_{\hit}(P;\xi)=\inf\{c>0:\hit(P;\xi)\leq \beta\}$.

The setup for the nonparametric bootstrap is as follows. $W_{n}$ is an
$n$-variate multinomially distributed random vector with probabilities
$1/n$ and $n$ trials. The resampled distribution is $\hat
P^{\ast}=\frac{1}{n}\sum_{j=1}^nW_{nj}\delta_{X_{-j}}$, where $\delta_x$
denotes the Dirac measure at $x$.

The following lemma shows
condition a) of  Theorem \ref{th:main},
the Hadamard differentiability of $\hit$ and $c_{\hit}$.
\begin{lemma}
\label{le:HaddiffCUSUMhit}
For every $P\in D_q$, and every $\xi \in \R\times(0,\infty)$, the
function $\hit$ is Hadamard differentiable at $P$ around $\xi$
tangentially to $D_0=\{H:\R\to \R: H\text{ continuous}, \lim_{t\to
  \infty}H(t)=\lim_{t\to-\infty}H(t)=0\}$.  If, in addition, $P$ has a
continuous bounded positive derivative $f$ with $f(x)\to 0$ as $x\to
\pm \infty$, then $c_{\hit}$ is also Hadamard differentiable at $P$
around $\xi$ tangentially to $D_0$.
\end{lemma}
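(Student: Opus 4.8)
The plan is to verify Definition~\ref{def:haddiffamily} directly from the representation $\hit(P;\xi)=\E\big[\mathbf 1_{A_\xi}(X_1,\dots,X_T)\big]$ with $X_1,\dots,X_T$ i.i.d.\ $P$, where, using $m(Y)=\max_{0\le k<i\le T}\sum_{j=k+1}^i Y_j$,
\[
A_\xi=\bigcup_{0\le k<i\le T}\Big\{x\in\R^T:\textstyle\sum_{j=k+1}^i x_j\ \ge\ \xi_2 c+(i-k)(\xi_1+\Delta/2)\Big\}.
\]
Only two structural facts about this set are used: $x\mapsto m(y(x;\xi))$ is jointly continuous in $(x,\xi)$, and $A_\xi$ is \emph{upward closed in every coordinate}. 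Fix sequences $\xi_n\to\xi$ in $\Xi$, $t_n\to0$ and $h_n\to h\in D_0$ with $P_n:=P+t_nh_n\in D_q$ for all $n$; note that $P_n$ and $P$ being distribution functions forces each $h_n$ to be of bounded variation with $h_n(\pm\infty)=0$, and that $\|P_n-P\|_\infty=|t_n|\,\|h_n\|_\infty\to0$, so $P_n\to P$ weakly. The device that makes the argument work — avoiding the fact that ``the $A_\xi$-measure of a distribution'' is not a continuous functional — is the telescoping identity $P_n^{\otimes T}-P^{\otimes T}=\sum_{\ell=1}^{T}P_n^{\otimes(\ell-1)}\otimes(P_n-P)\otimes P^{\otimes(T-\ell)}$, which together with $P_n-P=t_nh_n$ gives
\[
\frac{\hit(P_n;\xi_n)-\hit(P;\xi_n)}{t_n}=\sum_{\ell=1}^{T}\int\mathbf 1_{A_{\xi_n}}(x)\,dh_n(x_\ell)\prod_{m<\ell}dP_n(x_m)\prod_{m>\ell}dP(x_m).
\]
Integrating out the coordinates $m\ne\ell$ (Fubini applies, all measures being finite) rewrites the $\ell$-th summand as $\int G_{\ell,n}(u;\xi_n)\,dh_n(u)$, where $G_{\ell,n}(u;\xi)$ is the hitting probability conditional on $X_\ell=u$ when $X_m\sim P_n$ for $m<\ell$ and $X_m\sim P$ for $m>\ell$. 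Because $A_\xi$ is upward closed in coordinate $\ell$, each $G_{\ell,n}(\cdot;\xi)$ is nondecreasing and $[0,1]$-valued, hence of bounded variation, so I may integrate by parts; the boundary terms vanish (as $h_n(\pm\infty)=0$ and $G_{\ell,n}$ is bounded), leaving $-\int h_n(u)\,dG_{\ell,n}(u;\xi_n)$.

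It then remains to pass to the limit. Since $\|h_n-h\|_\infty\to0$ and $\mathrm{TV}\big(G_{\ell,n}(\cdot;\xi_n)\big)\le1$, it suffices that the monotone, uniformly bounded functions $G_{\ell,n}(\cdot;\xi_n)$ converge weakly (as finite measures) to $G_\ell(\cdot;\xi):=\Prob_P\big(m(Y)\ge c\mid X_\ell=\cdot\big)$; the candidate derivative is then
\[
\hit'(P;\xi)(h)=-\sum_{\ell=1}^{T}\int h(u)\,dG_\ell(u;\xi),
\]
which is linear in $h$ and satisfies $|\hit'(P;\xi)(h)|\le T\|h\|_\infty$, hence is continuous on $D_0$. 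The required weak convergence comes from $P_n\to P$ weakly together with joint continuity of $m(y(\cdot;\cdot))$, via dominated-convergence / portmanteau arguments for convergence of $G_{\ell,n}(u;\xi_n)$ at continuity points of $G_\ell(\cdot;\xi)$ and of the endpoint values $G_{\ell,n}(\pm\infty;\xi_n)$.

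The step I expect to be the main obstacle is precisely this last weak-convergence claim when $P$ is permitted to carry atoms: one must then exclude mass piling up on the level set $\{m(Y)=c\}$ under the conditional laws. The only thing available to do so is that the admissible perturbations are \emph{constrained to remain genuine distribution functions} ($P_n\in D_q$), which — as one checks — severely restricts how $P_n$ can approach $P$ in the sup-norm near an atom and rules out the would-be counterexamples; for atomless $P$ this is routine, since then each hyperplane $\{\sum_{j\in I}x_j=\text{const}\}$ is $P^{\otimes|I|}$-null and $\mathbf 1_{A_{\xi_n}}$ converges $P^{\otimes T}$-a.e.

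For $c_{\hit}$ I would deduce Hadamard differentiability from that of $\hit$ by an implicit-function argument, and this is exactly where the extra hypothesis on $P$ is used. Displaying the threshold as a subscript $\hit_c$: a bounded, positive, continuous density $f$ with $f(x)\to0$ as $x\to\pm\infty$ makes $c\mapsto\hit_c(P;\xi)$ continuously differentiable near $c_0:=c_{\hit}(P;\xi)$ with $\partial_c\hit_{c_0}(P;\xi)<0$ (the density of $m(Y)$ is continuous and strictly positive there, the tail decay of $f$ giving finiteness and continuity of the pertinent surface integrals). Differentiating the identity $\hit_{c_{\hit}(P;\xi)}(P;\xi)\equiv\beta$ and invoking the standard meta-lemma on Hadamard differentiability of implicitly defined functionals — the analogue of \citet[Lemma~21.3]{Vaart1998AS}, transported to the ``around $\xi$'' notion of Definition~\ref{def:haddiffamily} — then gives $c_{\hit}'(P;\xi)(h)=-\hit_{c_0}'(P;\xi)(h)\big/\partial_c\hit_{c_0}(P;\xi)$, again continuous and linear on $D_0$. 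The hypothesis of that meta-lemma that $\hit_c$ be Hadamard differentiable locally uniformly in $c$ near $c_0$ follows from the first part, all the estimates there being in terms of $\|h\|_\infty$ and of variations bounded by $1$.
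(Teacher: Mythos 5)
Your overall architecture is close in spirit to the paper's: the telescoping of $P_n^{\otimes T}-P^{\otimes T}$, the observation that the sections of $A_\xi$ are monotone so the conditional hitting kernels have total variation at most $1$, and the integration by parts to cope with $h$ being of possibly infinite variation are exactly the devices used in Lemma~\ref{le:diffhitprob}; your implicit-function treatment of $c_{\hit}$ likewise parallels Lemmas~\ref{le:Haddiffinversemap} and~\ref{le:diffcusumthreshold}. The difference is one of packaging: the paper factors $\hit=\phi\circ g$ through the cdf of the updates $Y_t=(X_t-\xi_1-\Delta/2)/\xi_2$, isolates all the $\xi$-dependence in the linear map $g(\cdot\,;\xi)$, and then invokes the chain rule (Lemma~\ref{le:chainrule}) together with a \emph{uniform} Hadamard differentiability statement for the middle map; you keep $\xi$ inside the sets $A_{\xi}$ and only pass to the limit at the very end.

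That limit passage is where your argument has a genuine gap, and it is the step you yourself flag. You need $\int h\,dG_{\ell,n}(\cdot\,;\xi_n)\to\int h\,dG_\ell(\cdot\,;\xi)$, i.e.\ vague convergence of the conditional hitting kernels, and your proposed rescue for atomic $P$ --- that the constraint $P_n\in D_q$ restricts the admissible perturbations --- attacks the wrong object. The obstruction does not come from $P_n$ at all: take $h_n=h$ fixed, $T=2$, let $P$ have an atom at $a_1(\xi):=\xi_2c+\xi_1+\Delta/2$ and let $\xi_{1n}\downarrow\xi_1$ strictly; then $G_{1,n}(u;\xi_n)$ involves $\Prob_P(X_2\ge a_1(\xi_n))\to\Prob_P(X_2>a_1(\xi))\neq\Prob_P(X_2\ge a_1(\xi))$ on a whole interval of $u$, so the kernels converge vaguely to the wrong limit and the difference quotient misses your candidate derivative. (This is a real delicacy of the ``around $\xi$'' notion rather than a removable technicality; the paper's own route also needs a continuity ingredient at exactly this point, namely $\|g(P;\xi_n)-g(P;\xi)\|_\infty\to0$.) The paper avoids any a.e.\ convergence of indicators altogether: in Lemma~\ref{le:diffhitprob} the difference of product measures is integrated by parts a \emph{second} time, producing bounds of the form $K\sup_z|H(z)|\,\sup_z|F_n(z)-F(z)|$ against integrators of total variation at most one, so only sup-norm convergence of the perturbed cdfs is ever used, while the tangential $\xi$-convergence is absorbed into $g(h;\xi_n)\to g(h;\xi)$, which holds because $h\in D_0$ is uniformly continuous. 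You should restructure your limit step along these lines, or restrict to continuous $P$ and say so. For the $c_{\hit}$ part, note also that ``Hadamard differentiable locally uniformly in $c$'' is not the hypothesis you need: the inverse map must be differentiated uniformly over perturbed base points $\theta_n\to\theta$ whose \emph{derivatives} converge (the metric $d$ in Lemma~\ref{le:Haddiffinversemap}), and verifying that the densities of $m(Y^{(n)})$ converge uniformly (part c of Lemma~\ref{le:diffcusumthreshold}) is a step your sketch omits.
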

The proof  is in Appendix \ref{sec:haddifhitprobex}, with preparatory results in
Appendix \ref{sec:chain-rule} - \ref{sec:diff-hitt-prob}.

Conditions b) and c) of Theorem \ref{th:main} follow directly from empirical process theory,
see e.g.\  \cite[p.17,Theorems 2.6 and 2.7]{Kosorok2008ItE}.
\cc{To see conditions b) and c) of  Theorem \ref{th:main}, we can argue as follows.
  In the language of empirical process theory, consider ${\cal
    F}=\{\R\to \R, x\mapsto 1_{(-\infty,a]}(x):a\in \R\}$ and let
  $l_{\infty}({\cal F})$ be the set of all bounded function ${\cal
    F}\to\R$.  As $\cal F$ can be identified with $\R$, we can
  idenfity $l_{\infty}({\cal F})$ with $l_{\infty}(\R)$, the set of
  bounded functions $\R\to \R$.  By \cite[p.17]{Kosorok2008ItE}, $\cal F$ is
  Donsker, i.e.\ if $X_1,\dots,X_n\sim P$ independently, and letting
  $P_n=\frac{1}{n}\sum_{i=1}^n\delta_{X_i}$ be the corresponding
  empirical measure, then $G_n=\sqrt{n}(P_n-P)\leadsto G$ in
  $l_{\infty}(\cal F)$ (or equivalently in $l_{\infty}(\R)$) for some
  $G$.

  thus $G_n$ is considered a random element in $l_{\infty}({\cal
      F})$ (or equivalently $l_{\infty}(\R)$), via
    $\sqrt{n}(P_n-P)(1_{(-\infty,a]})=\sqrt{n}(P_n((-\infty,a])-P((-\infty,a]))$

  Now, Theorems 2.6 and 2.7 of \cite{Kosorok2008ItE} give conditional convergence
  results for the nonparametric bootstrap, i.e. they show that
$\hat G_n\condweakconv{\hat P} G$ in $l_{\infty}({\cal F})$ and that the sequence $\hat G_n$ is asymptotically measurable.
Sufficient conditions for c)  are e.g.\ given in Theorems 3.6.1 and 3.6.2 on p.347 of \cite{Vaart1996WCa}
}
Condition e)  is satisfied as well, see  bottom of p.189  and after Theorem 10.4 (p.184) of \cite{Kosorok2008ItE}.

Verifying condition d) in full is  outside the scope of the present paper.
A starting point could be the fact that by
the Donsker theorem, $Z\sim G\circ P$, where $G$ is a Brownian bridge.
\cc{We would need to consider the derivative in Lemma
  \ref{le:diffhitprob} and in Lemma \ref{le:Haddiffinversemap}.}

\subsubsection{CUSUM charts with normally distributed observations}
\label{sec:cusum-charts-with}
In this section, we consider a similar setup to the monitoring based
on (\ref{eq:discrCUSUM_meanshift}) considered in the previous
subsection with the difference that we now use parametric assumptions.
More specifically,  the observations $X_i$ follow a normal
distribution with unknown mean $\mu$ and variance $\sigma^2$. We will
use this both for computing the properties of the chart as well as in
the bootstrap, which will be a parametric bootstrap version of
Algorithm \ref{alg:Bootstrap}.

The distribution of the observations can be identified with its
parameters which we estimate by $\hat P = (\hat \mu, \hat \sigma^2)$,
where $\hat \mu=\frac{1}{n}\sum_{i=1}^nX_{-i}$ and $\hat
\sigma^2=\frac{1}{n-1}\sum_{i=1}^n(X_{-i}-\hat \mu)^2$.  The set of
potential parameters is $D_q=\R\times(0,\infty)$ which is a subset of
the Euclidean space $D=\R^2$. The parameters needed to run the chart
(\ref{eq:discrCUSUM_meanshift}) are just the same as the one needed to
update the distribution, thus $\Xi=D_q$ and $\xi:D_q\to \Xi,
(\mu,\sigma)\mapsto (\mu,\sigma)$ is just the identity.

As before, we are interested in hitting probabilities within the first
$T$ steps.  Using the function $\hit$ defined in the previous
subsection, we can write the hitting probability in this parametric
setup as $\hit^N:D_q\times \Xi\to\R$, $(\mu,\sigma;\xi)\mapsto\hit(
\Phi_{\mu,\sigma^2};\xi)$, where $\Phi_{\mu,\sigma^2}$ is the cdf of
the normal distribution with mean $\mu$ and variance $\sigma^{2}$ and
the superscript $N$ stands for normal distribution.  Furthermore,
using $c_{\hit}$ from the previous subsection, the threshold needed to
achieve a given hitting probability is $c_{\hit}^N:D_q\times\Xi\to\R$,
$(\mu,\sigma;\xi)\mapsto c_{\hit}(\Phi_{\mu,\sigma^2};\xi)$.

The resampling is a parametric resampling. To put this in the framework of the main theorem, we let $W_n=(W_{n1},\dots,W_{nn})$, where
$W_{n1},\dots,W_{nn}\sim N(0,1)$ are independent. The
resampled parameters are then  $\hat
\mu^{\ast}_n=\frac{1}{n}\sum_{i=1}^nX_{ni}^{\ast}$ and $\hat \sigma^{\ast
  2}_n=\frac{1}{n-1}\sum_{i=1}^n(X^{\ast}_{ni}-\hat \mu^{\ast}_n)^2$
where $ X^{\ast}_{ni}=\hat P_2W_{ni}+\hat P_1$.

The following lemma shows that condition a) of  Theorem \ref{th:main} is  satisfied.
\begin{lemma}
\label{le:HaddiffCUSUMhitNORMAL}
For every $\theta\in \R\times (0,\infty)$ and every  $\xi \in \R\times(0,\infty)$,
the functions $\hit^N$ and $c_{\hit}^{N}$ are Hadamard differentiable at $\theta$ around $\xi$.
\end{lemma}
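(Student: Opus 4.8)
The plan is to realise $\hit^N$ and $c_{\hit}^N$ as compositions of the maps $\hit$ and $c_{\hit}$ from Section~\ref{sec:theor:ex:CUSUM:nonpar} with the parametrisation
$$
\phi:\R\times(0,\infty)\to D_q\subset l_\infty(\R),\qquad \phi(\mu,\sigma)=\Phi_{\mu,\sigma^2},
$$
so that $\hit^N(\mu,\sigma;\xi)=\hit(\phi(\mu,\sigma);\xi)$ and $c_{\hit}^N(\mu,\sigma;\xi)=c_{\hit}(\phi(\mu,\sigma);\xi)$. Since $\phi$ carries no dependence on the chart parameter $\xi$, it suffices to show that $\phi$ is (ordinarily) Hadamard differentiable at $\theta=(\mu,\sigma)$ with derivative $\phi'(\theta)$ mapping $\R^2$ into the tangent set $D_0$ of Lemma~\ref{le:HaddiffCUSUMhit}, and then to invoke the chain rule of Appendix~\ref{sec:chain-rule}, in the form adapted to Definition~\ref{def:haddiffamily}, with inner map $(\mu,\sigma;\xi)\mapsto(\phi(\mu,\sigma),\xi)$ and outer family $\hit$ (respectively $c_{\hit}$). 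Note that $\phi(\theta)=\Phi_{\mu,\sigma^2}$ lies in $D_q$ (all moments finite) and has density $x\mapsto\sigma^{-1}\varphi((x-\mu)/\sigma)$, with $\varphi=\Phi'$, which is continuous, bounded, strictly positive and tends to $0$ at $\pm\infty$; hence the extra hypothesis of Lemma~\ref{le:HaddiffCUSUMhit} needed for the $c_{\hit}$ part is automatically met.

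The candidate derivative is the continuous linear map $\phi'(\theta):\R^2\to l_\infty(\R)$,
$$
\phi'(\theta)(h_1,h_2)=\Bigl(x\mapsto -\tfrac{h_1}{\sigma}\,\varphi\bigl(\tfrac{x-\mu}{\sigma}\bigr)-\tfrac{h_2(x-\mu)}{\sigma^2}\,\varphi\bigl(\tfrac{x-\mu}{\sigma}\bigr)\Bigr),
$$
namely the two partial derivatives in $\theta=(\mu,\sigma)$ of $g(\theta,x):=\Phi((x-\mu)/\sigma)$, which we denote $g_\mu(\theta,x)$ and $g_\sigma(\theta,x)$. Because $z\varphi(z)\to0$ as $z\to\pm\infty$, both $g_\mu(\theta,\cdot)$ and $g_\sigma(\theta,\cdot)$ are continuous, bounded and vanish at $\pm\infty$, so $\phi'(\theta)$ indeed maps into $D_0$. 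To verify Hadamard differentiability I would take $t_n\to0$ and $h_n=(h_{n1},h_{n2})\to h=(h_1,h_2)$ with $\sigma+t_nh_{n2}>0$; for each $x$ the mean value theorem along the segment from $\theta$ to $\theta+t_nh_n$ yields
$$
\frac{g(\theta+t_nh_n,x)-g(\theta,x)}{t_n}=h_{n1}\,g_\mu(\tilde\theta_n(x),x)+h_{n2}\,g_\sigma(\tilde\theta_n(x),x)
$$
for some $\tilde\theta_n(x)$ on that segment, whence $\sup_x\|\tilde\theta_n(x)-\theta\|\to0$ and, for large $n$, all $\tilde\theta_n(x)$ lie in a fixed compact neighbourhood $K\subset\R\times(0,\infty)$ of $\theta$. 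It then remains to show
$$
\sup_x\bigl|g_\mu(\tilde\theta_n(x),x)-g_\mu(\theta,x)\bigr|\to0\quad\text{and}\quad\sup_x\bigl|g_\sigma(\tilde\theta_n(x),x)-g_\sigma(\theta,x)\bigr|\to0,
$$
which follows since the maps $\theta\mapsto g_\mu(\theta,\cdot)$ and $\theta\mapsto g_\sigma(\theta,\cdot)$ are continuous from $\R\times(0,\infty)$ into $D_0$ with the sup-norm — this being continuity of translating and rescaling a fixed rapidly decreasing smooth function — and hence uniformly continuous on $K$.

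Granting the Hadamard differentiability of $\phi$ into $D_0$, and of $\hit(\cdot;\xi)$ and $c_{\hit}(\cdot;\xi)$ at $\phi(\theta)$ around $\xi$ tangentially to $D_0$ (Lemma~\ref{le:HaddiffCUSUMhit}), the chain rule gives that $\hit^N$ and $c_{\hit}^N$ are Hadamard differentiable at $\theta$ around $\xi$ in the sense of Definition~\ref{def:haddiffamily}, with $(\hit^N)'(\theta;\xi)(h)=\hit'(\phi(\theta);\xi)\bigl(\phi'(\theta)h\bigr)$ and the analogous formula for $c_{\hit}^N$; being a composition of continuous linear maps, the derivative is itself continuous and linear. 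The only substantive point is the uniform-in-$x$ estimate of the preceding paragraph, i.e.\ upgrading the evident pointwise differentiability of $x\mapsto\Phi((x-\mu)/\sigma)$ in $(\mu,\sigma)$ to differentiability in the sup-norm of $l_\infty(\R)$, which rests on the uniform continuity on compacta of the Gaussian-density family entering the partials. Everything else — verifying the density condition for the $c_{\hit}$ part, and that the appendix chain rule is stated generally enough to accommodate a $\xi$-indexed outer family (it is, since $\phi$ is $\xi$-free and $\xi$ appears only as a spectator second argument) — is routine.
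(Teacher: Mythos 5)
Your proof is correct, but it takes a genuinely different route from the paper's. The paper mirrors the proof of Lemma~\ref{le:HaddiffCUSUMhit}: it writes $\hit^N=\phi\circ g$ with the inner $\xi$-indexed family $g(\mu,\sigma,\xi)=(x\mapsto\Phi((\xi_1+\Delta/2+\xi_2 x-\mu)/\sigma))$ sending the parameters directly to the cdf of the chart updates, and then reuses the uniform differentiability of the hitting functional (Lemma~\ref{le:diffhitprob}), the chain rule (Lemma~\ref{le:chainrule}) and, for $c_{\hit}^N$, the inverse map (Lemma~\ref{le:Haddiffinversemap}) with the density-convergence argument of Lemma~\ref{le:diffcusumthreshold}. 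You instead factor $\hit^N(\theta;\xi)=\hit(\phi(\theta);\xi)$ through the Gaussian parametrisation $\phi(\mu,\sigma)=\Phi_{\mu,\sigma^2}$ and use the already-proven nonparametric Lemma~\ref{le:HaddiffCUSUMhit} as the outer map; the only new work is then ordinary Hadamard (here Fr\'echet) differentiability of $\phi$ into $l_{\infty}(\R)$ with derivative in $D_0$, which your mean-value/uniform-continuity argument establishes correctly, and the normal density trivially satisfies the extra hypothesis needed for the $c_{\hit}$ part, so all the heavy machinery is inherited rather than re-run. One caution: Lemma~\ref{le:chainrule} as stated does not literally cover your configuration---there the inner map is the $\xi$-family and the outer map is $\xi$-free and \emph{uniformly} Hadamard differentiable, whereas you have a $\xi$-free inner map and a $\xi$-indexed outer family. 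No new chain rule is actually needed, though: since the base point $\phi(\theta)$ of the outer family is fixed, you can set $k_n=(\phi(\theta+t_nh_n)-\phi(\theta))/t_n\to\phi'(\theta)h\in D_0$ and apply Definition~\ref{def:haddiffamily} for $\hit$ (resp.\ $c_{\hit}$) at $P=\phi(\theta)$ with the sequences $\xi_n\to\xi$, $t_n\to 0$, $k_n\to\phi'(\theta)h$, which gives the limit $\hit'(\phi(\theta);\xi)(\phi'(\theta)h)$ directly; no uniform differentiability of the outer map is required precisely because its base point does not move. With that one-line substitution argument made explicit in place of the appeal to Lemma~\ref{le:chainrule}, your proof is complete and arguably more economical than the paper's ``similar steps'' sketch.
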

The proof can be found in Appendix \ref{sec:haddifhitprobex}, using again the preparatory results of
Appendix \ref{sec:chain-rule} - \ref{sec:diff-hitt-prob}.

Concerning the other conditions of Theorem \ref{th:main}: Condition b)
can be shown using standard asymptotic theory, e.g.\ maximum likelihood
theory, which will yield that $Z$ is normally distributed. \cc{could
  argue via the $(\hat \mu, (n-1)/n\hat \sigma^2)$ being the MLE}
Condition c) is essentially the requirement that the parametric
bootstrap of normally distributed data is working.  \cc{This should be
  easy to shown by arguing conditionally on the estimators. There may
  be something in vdVaart, asymptotic statistics - but he is just
  using nonparametric resampling.}  As $Z$ is a normally distributed
vector, condition d) holds unless $q'$ equals 0.  Condition e) is
satisfied, as the mapping $W_n\mapsto\hat P^{\ast}(\hat P, W_n) =
(\hat\mu_n^{\ast}, \hat\sigma_n^{\ast 2})$ is continuous and hence
measurable.

\subsubsection{Setup for Shewhart charts}
For Shewhart charts, the same setup as in the previous two sections
can be used, the only difference is the choice of $q$. Conditions b),
c) and e) are as in the previous two sections.  We conjecture that it is possible to show the  Hadamard
differentiability  more directly, as the properties are
available in closed form, see Section \ref{subsubsec:Shewhart}.

\cc{
With $G=1-p$,
 $\hit(G)=(c\mapsto 1-G(c)^T)$,  $\hit'(G)(H)=(c\mapsto -G(c)^{T-1}H(c))$,

 $\frac{\partial}{\partial c} \hit(G)(c)=-T G(c)^{T-1}g(c)$

 $\ARL(G)=(c\mapsto\frac{1}{1-G(c)})$, $\ARL'(G)(H)=(c\mapsto\frac{1}{(1-G(c))^2}H(c))$ \cc{see \cite[Lemma 3.9.25]{Vaart1996WCa}}
 $\frac{\partial}{\partial c}\ARL(G)(c)=\frac{1}{(1-G(c))^2}g(c)$
}

\section{Simulations for homogeneous observations}
\label{sec:simulsingle}

We now illustrate our approach by some simulations using
CUSUM charts. The simulations were done in R \citep{R}.

We use two past sample sizes, $n=50$ and
$n=500$. The in-control distribution of $X_t$ is $N(0,1)$ and
we use 1000 replications and $B=1000$ bootstrap
replications. We employ both the parametric bootstrap and the
nonparametric bootstrap mentioned in the previous sections.  For the
parametric bootstrap we used the sample mean and sample standard
deviation of $X_{-n},\dots,X_{-1}$ as estimates for the mean and the standard deviation of
the observations.

For the performance measures $\ARL$, $\log(\ARL)$, $\hit$ and
$\logit(\hit)$ we use a threshold
of $c=3$. For $c_{\ARL}$ we calibrate to  an $\ARL$ of $100$
in control and for $c_{\hit}$ we calibrate to a false alarm probability of
$5\%$ in 100 steps.

We use the CUSUM chart (\ref{eq:discrCUSUM_meanshift}) with $\Delta=1$
and $\mu$ and $\sigma$ estimated from the past data. To compute
properties such as $\ARL$ or hitting probabilities, we  use a
Markov chain approximation (with 75 grid points), similar to the one
suggested in \cite{BROOK1972atp}\cc{there is a precise description of
  a grid that is being used in that paper - I think we are using
  something similar but most likely not completely identical}. \cc{We
checked that this gave very good  approximations.}

\subsection{Coverage probabilities}
\label{subsec:simulcoverage}

Table \ref{tab:covprob_simnormal} contains coverage probabilities of
nominal 90\% confidence intervals. These are the one-sided lower confidence
intervals given by (\ref{eq:onesidedapproxconfint}), except for
$q=\ARL$ and $\log(ARL)$ where the corresponding upper interval is used.

\begin{table}
    \caption{Coverage probabilities of nominal 90\% confidence intervals for CUSUM charts.
  \label{tab:covprob_simnormal}}
\begin{center}
\parbox{0.58\textwidth}{
    \begin{tabular}{|c| c| c| c| c| }\hline
&\multicolumn{2}{|c|}{Parametric }& \multicolumn{2}{c|}{Nonparametric} \\
\hline
q=& n=50&n=500&n=50&n=500 \\
\hline
$\ARL$&1.000&0.929&0.999&0.944\\
$\log(\ARL)$&0.928&0.899&0.902&0.915\\
$\hit$&0.923&0.896&0.878&0.910\\
$\logit(\hit)$&0.892&0.893&0.870&0.904\\
$c_{\ARL}$&0.881&0.892&0.846&0.893\\
$\log(c_{\ARL})$&0.896&0.895&0.868&0.904\\
$c_{\hit}$&0.878&0.890&0.843&0.891\\
$\log(c_{\hit})$&0.897&0.893&0.856&0.901\\
\hline\end{tabular}
\\
      The standard deviation of the results is roughly 0.01.
 }
    \end{center}
\end{table}

In the parametric case, for $n=50$, the coverage probabilities are
somewhat off for untransformed versions, in particular for $q=\ARL$.
Using $\log$ or $\logit$ transformations seems to improve the coverage
probabilities considerably.  In the parametric case, for $n=500$, all
coverage probabilities seem to be fine, except for $q=\ARL$, which
although  shows some marked improvement compare to $n=50$.  In the
nonparametric case, a similar picture emerges, but the coverage
probabilities are a bit worse than in the parametric case.


\begin{remark}
\label{rem:scalingdoesnotmatter}
  For $q= \log(c_{\ARL})$ and $q= \log(c_{\hit})$ the division by
  $\hat\sigma$ in
  (\ref{eq:discrCUSUM_meanshift}) could be skipped without making a
  difference to the coverage probabilities.  Indeed, the division by
  $\hat\sigma$ just scales the chart (and the resulting threshold) by
  a multiplicative factor, which is turned into an additive factor by
  $\log$ and which then cancels out in our adjustment.
\end{remark}

\subsection{The benefit of an adjusted threshold}
In this section, we consider both the in- and out-of-control
performance of CUSUM charts when adjusting the threshold $c$ to give a
guaranteed in-control $\ARL$ of 100.  Setting the threshold is, in our
opinion, the most important practical application of our method.

\begin{figure}[tb]
  \centering
  \includegraphics[width=\linewidth]{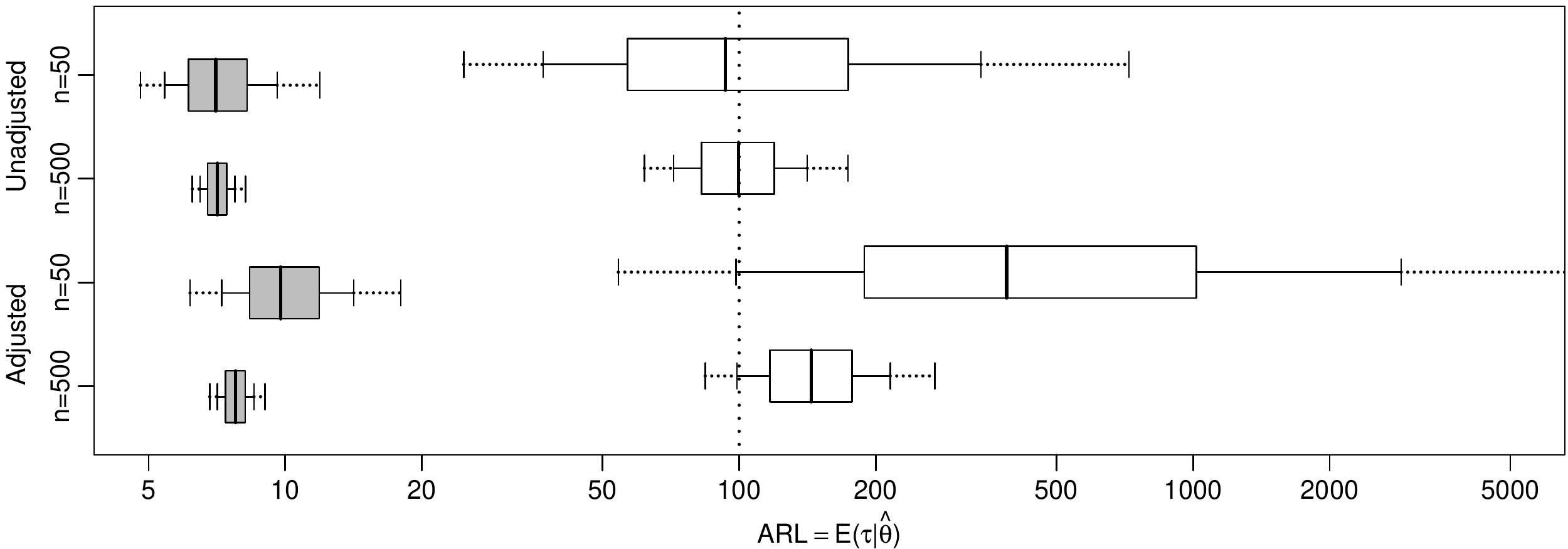}
  \caption{Distribution of the conditional $\ARL$ for CUSUMs in a
    normal distribution setup. Thresholds are calibrated to an
    in-control $\ARL$ of 100.  The adjusted thresholds have a
    guarantee of 90\%.  A log transform is used in the calibration.
    The boxplots show the 2.5\%, 10\%, 25\%, 50\%, 75\%, 90\% and
    97.5\% quantiles.  The white boxplots are in-control, the gray
    boxplots out-of-control.}
  \label{fig:adjusted_unadjusted_ARL}
\end{figure}

Figure \ref{fig:adjusted_unadjusted_ARL} shows average run lengths
for both  the unadjusted threshold $c(\hat P;\hat \mu, \hat \sigma)$ and the
adjusted threshold $\exp(\log( c(\hat P;\hat \mu, \hat \sigma))-p^{\ast}_{0.1})$, where $p^{\ast}_{0.1}$ is computed via the parametric
bootstrap using $q=\log(c_{\ARL})$.  Thus, with  90\%  probability, the  adjusted threshold should
lead to an $\ARL$ that is above 100. In this and in all  following
simulations,
the out-of-control ARL refers
to the  situation where the chart is  out-of-control from the
beginning, i.e.\ from
time 0 onwards.

For the unadjusted threshold, the desired in-control average run
length is only reached in roughly half the cases.  More importantly,
for $n=50$, the probability of having an in-control $\ARL$ of below $50$
is greater than 20\%.

With the adjusted threshold we should get an average run length of at
least 100 in 90\% of the cases.  This is achieved.  The
out-of-control $\ARL$ using the adjusted thresholds increases only
slightly compared to the unadjusted version.

Similarly to Remark \ref{rem:scalingdoesnotmatter}, removing the
scaling by $\hat \sigma$ in (\ref{eq:discrCUSUM_meanshift}) would not
change the results of this section.

\subsection{Nonparametric bootstrap - advantages and disadvantages}

In this section, we compare the parametric and the non-parametric
bootstrap.  We consider CUSUM charts that are calibrated to an
in-control average run length of 100 assuming a normal distribution.
We use the adjusted threshold $\exp(\log( c_{\ARL}(\hat P;\hat
\mu, \hat \sigma))-p^{\ast}_{0.1})$.

Figure \ref{fig:par_nonpar_ARL} shows the distribution of $\ARL$ for
$n=50$ and $n=500$ for both the parametric bootstrap that assumes a normal
distribution of the updates and the nonparametric bootstrap. We consider both a
correctly specified model where $X_t\sim N(0,1)$ as well as two
misspecified models where $X_t\sim \text{Exponential}(1)$ and $\sqrt{20}X_t\sim
\chi^2_{10}$ (all of the $X_t$ have variance 1).  We show both the in- as well as the
out-of-control performance of the charts.

\begin{figure}[tb]
  \centering
  \includegraphics[width=\linewidth]{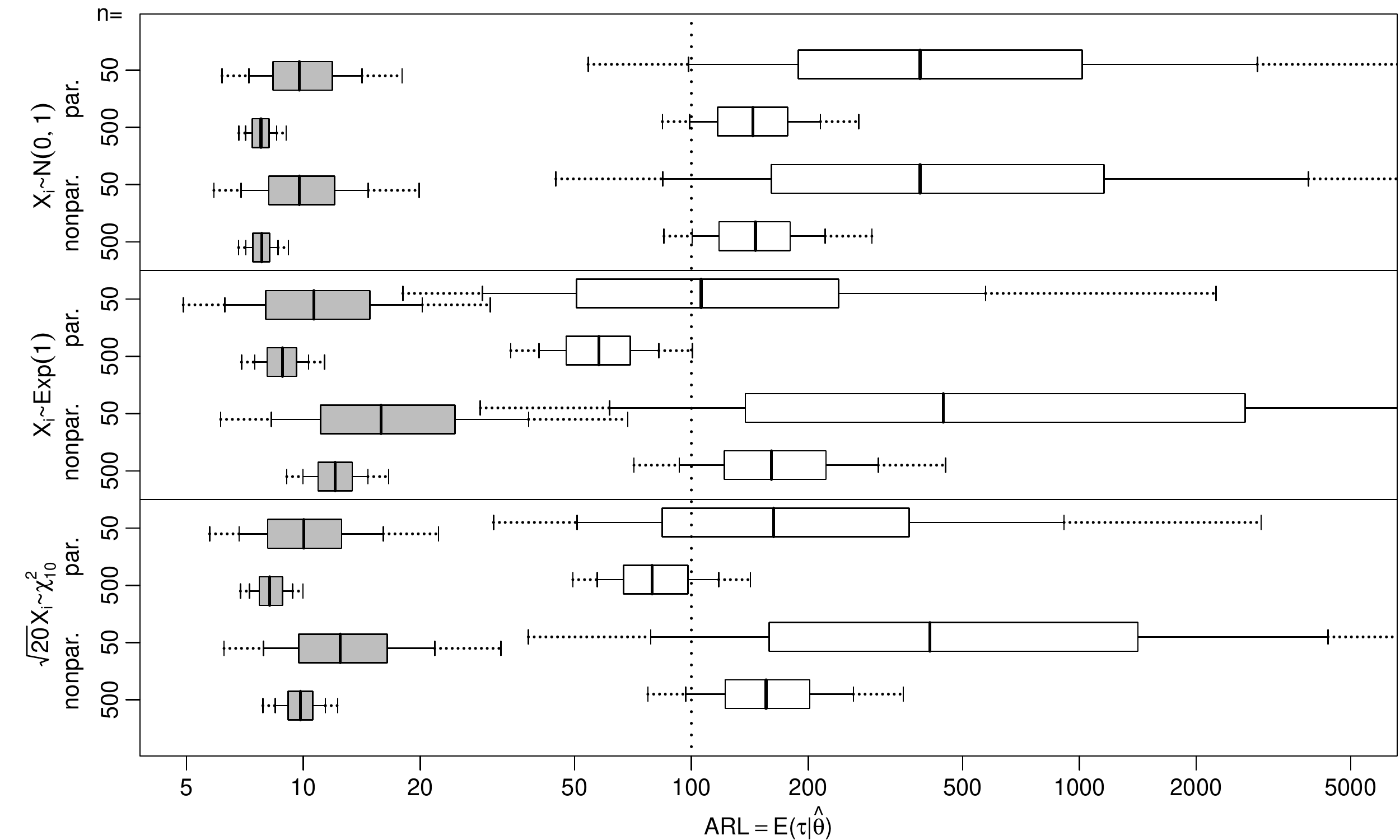}
  \caption{Effects of misspecification. Thresholds are calibrated to
    an in-control ARL of 100 and adjusted to the estimation error with
    a guarantee of 90\%.  A log transform is used in the
    calibration. The white boxplots are in-control, the gray
    boxplots are  out-of-control. The boxplots show
    the 2.5\%, 10\%, 25\%, 50\%, 75\%, 90\% and 97.5\% quantiles.}
  \label{fig:par_nonpar_ARL}
\end{figure}

In the correctly specified model ($X_t\sim N(0,1)$), the performance
of the parametric and the non-parametric chart seems to be almost
identical. The only difference is a slightly worse in-control
performance for the non-parametric chart for $n=50$.

In the misspecified model with $X_t\sim \text{Exponential}(1)$, the
parametric chart does not have the desired in-control
probabilities. The non-parametric chart seems to be doing well, in
particular for $n=500$.  We have a similar results in the other
misspecified model, with $\sqrt{20}X_t\sim \chi^2_{10}$.

\section{Regression models}
\label{sec:regmod}

In many monitoring situations, the units being monitored are heterogeneous,
for instance when monitoring patients at hospitals or bank customers.
To make sensible monitoring systems in such situations, the explainable
part of the heterogeneity should be accounted for by relevant
regression models. The resulting charts are often called risk
adjusted, and an overview of some such charts can be  found in \cite{Grigg2004oor}.

To run risk adjusted charts, the regression model needs to be estimated
based on past data, and this estimation needs to be accounted for.  Our
approach for setting up charts with a guaranteed performance applies
also to risk adjusted charts,
and we will in particular look at linear, logistic and survival
models.

\subsection{Linear models}
\label{subsec:linmod}

Suppose we have independent observations $(Y_1,X_1),$  $(Y_2,X_2)$, $\ldots$,
where $Y_i$ is a response of interest and $X_i$ is a corresponding
vector of covariates, with the first component usually equal to 1.
Let $P$ denote the joint distribution of $(Y_i,X_i)$ and suppose that
in control $\E(Y_i|X_i)=X_i\xi$.  From some observation
$\kappa$ there is a shift in the mean response to
$\E(Y_i|X_i)=\Delta+X_i\xi$ for $i=\kappa,\kappa+1,\dots$.

Monitoring schemes for detecting changes in regression models can
naturally be based on residuals of the model, see for instance
\cite{Brown1975TfT} and \cite{Horvath2004Mci}.
 We can, for instance,
define a CUSUM to monitor changes in the conditional mean of $Y$ by
$$S_t=\max(0, S_{t-1}+Y_t-X_t \xi - \Delta/2), \quad S_0=0, $$
with hitting time $\tau=\inf\{t>0: S_t\geq c\}$ for some threshold
$c>0$.
In a similiar manner we could also set up charts for
monitoring changes in other components of  $\xi$.

The parameter vector $\xi$ is estimated from past in
control data, e.g.\ by the standard least squares estimator.  We
suggest to use a nonparametric version of the general Algorithm
\ref{alg:Bootstrap} with $\hat P$ being the
empirical distribution putting weight $1/n$ on each of the past
observations $(Y_{-n},X_{-n}),\dots,(Y_{-1},X_{-1})$. Resampling is
then equivalent to resampling
$(Y^{\ast}_{-n},X^{\ast}_{-n}),\dots,(Y^{\ast}_{-1},X^{\ast}_{-1})$ by
drawing with replacement from $\hat P$.

The suggested method should  work  even if the linear model is misspecified,
i.e.\  $\E(Y_i|X_i)=X_i\xi$ does not necessarily hold. The nonparametric
bootstrap should take  this into account.

An analogous approach can be used for Shewhart charts. In settings
where it is reasonable to consider the covariate vector to be
non-random one could alternatively use bootstrapping of residuals, see
for example \cite{Freedman1981BRM}.

\subsubsection{Theoretical considerations}

Obtaining precise results is more demanding than in the examples
without covariates in Section~\ref{sec:th:examples}. We only
give an idea of the setup that might be used.

The set of distributions of the observations $D_q$ can be chosen as the
set of cdfs on $\R^{d+1}$ with finite second moments, where $d$ is the dimension of the covariate. The first cdf corresponds to the responses, the others to the covariates. $D_q$ is contained
in the vector space $D=l_{\infty}(\R^{d+1})$, the set of bounded functions $\R^{d+1}\to \R$.
The parameters needed to run the chart are the regression coefficients contained in the set  $\Xi=\R^d$.
These parameters are obtained from the distribution of the observations via  $\xi:D_q\to \Xi$, $F\mapsto
(E(X^TX))^{-1}E(X Y)$ where $(Y,X)\sim F$ where $X$ is considered to
be a row vector.

We conjecture that the conditions of Theorem \ref{th:main} are broadly
satisfied if the cdf  of $Y-X\xi$ is differentiable and if for
the property $q$ we use  hitting probabilities or thresholds to
achieve a given hitting probability.  In particular, it should be
possible to show Hadamard differentiability similarly to Lemma
\ref{le:HaddiffCUSUMhit}: write $q$ as concatenation of two functions
and use the chain rule in Lemma \ref{le:chainrule}.  The first mapping
returns the distribution of the updates of the chart depending on
$F\in D_q$ and $\xi\in \Xi$ via $(F;\xi)\mapsto {\cal L}(Y-X\xi-
\Delta)$, where ${\cal L}$ denotes the law of a random variable.  The
second takes the distribution of the updates and returns the property of
interests. The differentiability of the second map has been shown in
Lemmas \ref{le:Haddiffinversemap} and  \ref{le:diffhitprob}.

\subsubsection{Simulations}
\label{example:CUSUMLinReg}

We illustrate the performance of the bootstrapping scheme using a
CUSUM and the linear in-control model
$Y=X_{1}+X_{2}+X_3+\epsilon$. Let $\epsilon\sim N(0,1)$, $X_{1}\sim
\text{Bernoulli}(0.4)$, $X_2\sim U(0,1)$ and $X_3\sim N(0,1)$, where
$X_1,X_2,X_3$ and $\epsilon$ are all independent. The out-of-control
model is $Y=1+X_{1}+X_{2}+X_3+\epsilon$,
i.e. $\Delta=1$. Figure~\ref{fig:regression_ARL} shows the distribution
of the attained ARL for CUSUMs with thresholds calibrated to give an
in control ARL of 100. We see that the behaviour of the adjusted
versus unadjusted thresholds are very similar to what we observed for
the simpler model in Figure~\ref{fig:adjusted_unadjusted_ARL}. The
coverage probabilities obtained for this regression model, not
reported here, are also very similar to the covarage probabilities
reported in Table~\ref{tab:covprob_simnormal}, though with a tendency
to be slightly worse.
\begin{figure}[tb]
  \centering
  \includegraphics[width=\linewidth]{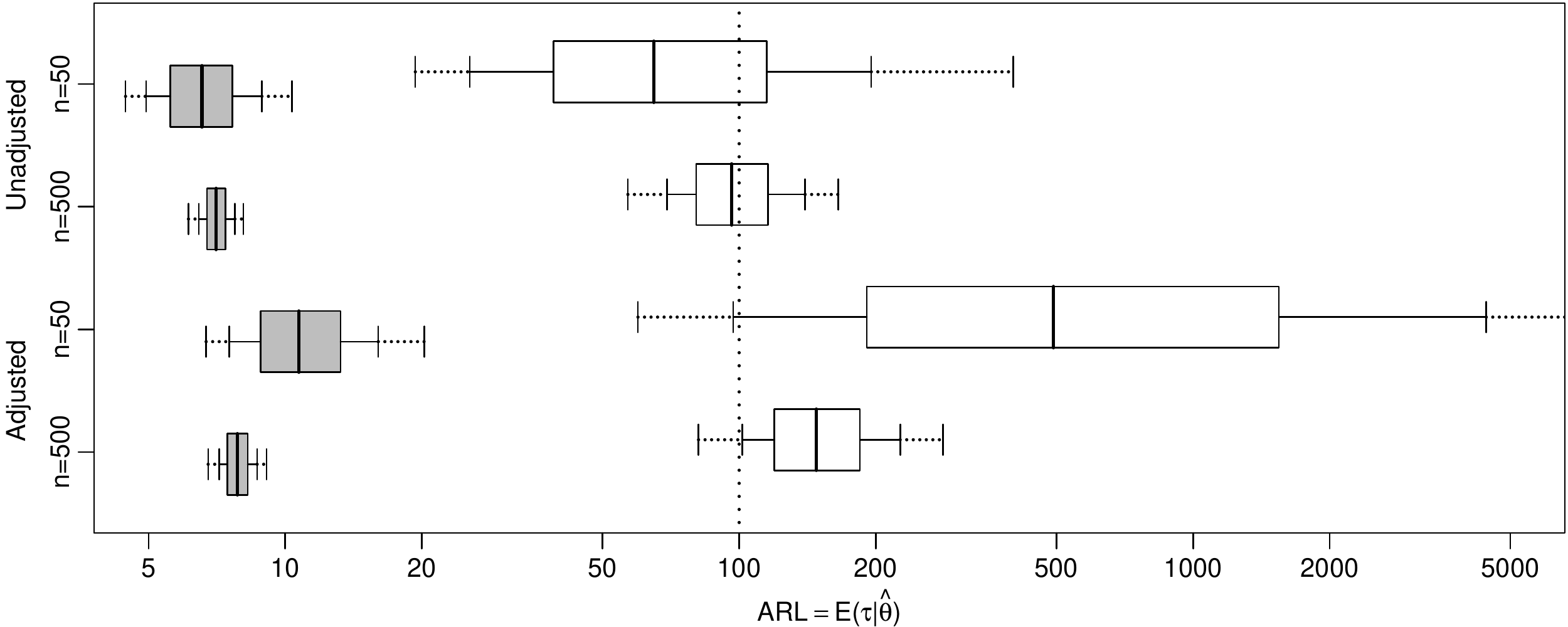}
  \caption{Distribution of the conditional $\ARL$ for CUSUMs in a
    linear regression setup. Thresholds are calibrated to an
    in-control $\ARL$ of 100. A log transform is used in the
    calibration.  The adjusted thresholds have a guarantee of
    90\%. The white boxplots are in control, the gray
    out-of-control. The boxplots show the 2.5\%, 10\%, 25\%, 50\%,
    75\%, 90\% and 97.5\% quantiles.}
  \label{fig:regression_ARL}
\end{figure}

\subsection{Logistic regression}
\label{subsec:logreg}

Control charts, in particular CUSUM charts, based on logistic
regression models are popular for modelling of binary
outcomes in medical contexts. See e.g.\ \cite{Lie1993nsp}, \cite{Steiner2000Msp},
\cite{Grigg2004oor} and \cite{Woodall2006Tuo}.

Suppose we have independent observations $(Y_1,X_1),(Y_2,X_2),\ldots,$
where $Y_i$ is a binary response variable and $X_i$ is a corresponding
vector of covariates. Further, suppose that in control the log odds
ratio is $\logit(\Prob(Y_i=1|X_i))=X_i\xi$, and that from some observation
$\kappa$ there is a shift in the log odds ratio to
$\logit(\Prob(Y_i=1|X_i))=\Delta+X_i\xi$ for $i=\kappa,\kappa+1,\dots$

A CUSUM to monitor changes in the odds ratio can be defined by \citep{Steiner2000Msp}
$$S_t=\max(0, S_{t-1}+R_t), \quad S_0=0, $$
where $R_t$ is the log likelihood ratio between the in-control and out-of-control model for observation $t$. More precisely
$$
\exp(R_t)=\frac{\exp(\Delta+X_t\xi)^{Y_t}/(1+\exp(\Delta+X_t\xi))}{\exp(X_t\xi)^{Y_t}/(1+\exp(X_t\xi))}
=\exp(Y_t\Delta)\frac{1+\exp(X_t\xi)}{1+\exp(\Delta+X_t\xi)}.
$$

The parameter vector $\xi$ is estimated from past in-control
data by e.g.\  the standard maximum likelihood estimator. The same
nonparametric bootstrap approach as described for the linear model in
Section~\ref{subsec:linmod} can now be applied to this CUSUM based on
this logistic regression model. Moreover, this approach would also
apply to control charts based on other generalized linear models, for
instance Poisson regression models for monitoring count data. The only
amendment needed is to replace $R_t$ by the relevant log likelihood
ratio.


We have run simulations, not reported here, based on the same covariate
specifications as in Section~\ref{example:CUSUMLinReg}.  The results
are similar to the results for the linear model of
Section~\ref{example:CUSUMLinReg}.

\subsection{Survival analysis models}

Recently, risk adjusted control charts based on survival models have
started to appear, see
\cite{Biswas2008rCi,Sego2009Rmo,Steiner2009ras,Gandy2010ram}. In none
of these papers any adjustment for estimation error is done, but
\cite{Sego2009Rmo} are  illustrating, by simulations, the impact of
estimation error on the attained average run length  for the
accelerated failure time model based CUSUM studied in their paper.

In the following, we provide a brief simulation example of our
adjustment in a survival setup where we use the methods described in
\cite{Gandy2010ram}.

We observe the survival of individuals over a fixed time interval of
length $n$ (we will use $n=100$ and $n=500$).  Individuals arrive at
times $B_i$ (in our simulation according to a Poisson process with
rate $1$), and survive for $T_i$ time units. Individuals may arrive
before the observation interval, as long as $B_i+T_i$ is after the
start of the observation interval.  Right-censoring, at $C_i$ time
units after arrival, is taking place after a maximum follow-up time of
$t=60$ time units or after the individuals leave the observation
interval.  In the simulation, the true hazard rate of $T_{i}$ is
$h_i(t)= 0.1\exp( X_{1i}+X_{2i})$, where $X_{1i}\sim
\text{Bernoulli}(0.4)$ and $X_{2i}\sim N(0,1)$ are covariates.

Based on the observed data we fit a Cox proportional hazard model
with $X_{1i}$ and $X_{2i}$ as covariates and nonparametric baseline,
giving estimates $\hat \beta$ for the covariate effects and $\hat
\Lambda_0(t)$ for the the integrated baseline.

We use the CUSUM chart described in
\cite{Gandy2010ram} against a proportional alternative with
$\rho=1.25$.  The parameters
needed to run the chart are $\xi=(\beta, \Lambda_0)$ estimated by
$\hat\xi=(\hat \beta, \hat \Lambda_0)$.
To be precise, the chart signals at time
$\tau=\inf\{t>0:S(t)\geq c\}$, where
$S(t)=R(t)-\inf_{s\leq t}R(s)$,
$
                              R (t ) = \log(\rho ) N (t ) - (\rho - 1)
\Lambda(t),
$
 $N (t )$ is the number of events until time $t$ and $\Lambda(t ) =
\sum_{i} \exp( \beta_1X_{i1}+ \beta_2X_{2i})
\Lambda_0(\min((t-B_i)^{+},T_i,C_i))$.

We are interested in finding a threshold that gives a desired
hitting probability, i.e. we use $q=c_{\hit}$.  We compute
$c_{\hit}(P,\xi)$ via simulations (simulate new data from $P$ and run
the chart with $\xi$). We estimate the threshold needed to get a 10\%
false alarm probability in $n$ time units in control, by the 90\% quantile of 500
simulations of the maximum of the chart.

To resample, we resample individuals with replacement. We use 500
bootstrap samples.  Figure \ref{fig:hitprobsurvanal} shows the
distribution of the resulting hitting probabilities based on 500
simulated observation intervals.

\begin{figure}
 \centering
\includegraphics[width=\linewidth]{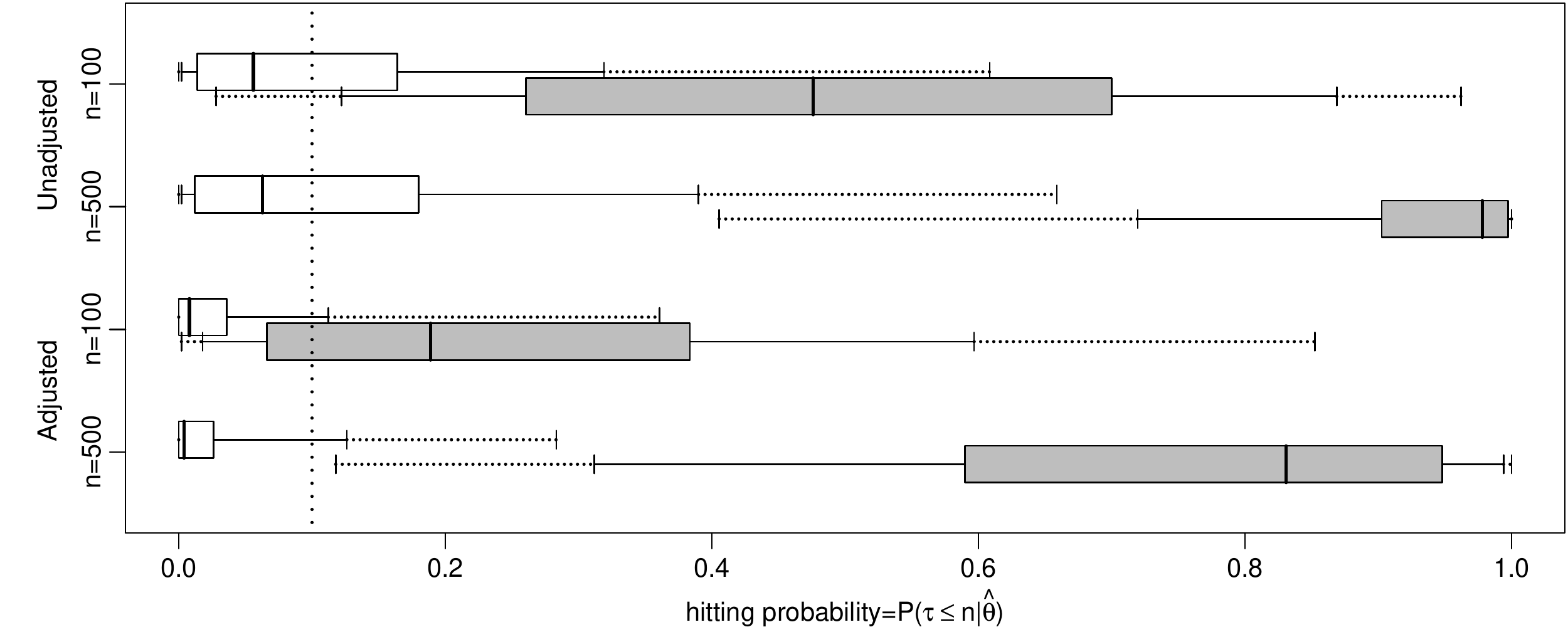}
\caption{Distribution of the conditional hitting probability for
  survival analysis CUSUMs. Thresholds are calibrated to an in-control
  hitting probability of 0.1.  The adjusted thresholds have a
  guarantee of 90\%.  The white boxplots are in control, the gray
  out-of-control. The boxplots show the 2.5\%, 10\%, 25\%, 50\%, 75\%,
  90\% and 97.5\% quantiles. \label{fig:hitprobsurvanal}}
\end{figure}

In control, without the adjustment, the desired false
alarm probability of 0.1 is only reached in roughly 60\% of the cases.  The
bootstrap correction seems to work fine, leading to a false alarm probability
of at most 10\% in roughly 90\% of the cases.  As expected, increasing the
length of the fitting period and the length of time the chart is run
from $n=100$ to $n=500$ results in higher out-of-control hitting probabilities.

If the length of the fitting period and the deployment period of the chart differ then
a somewhat more complicated resampling procedure needs to be used.
For example, one could
 resample arrival times and survival times/covariates separately.
The former could be done by assuming a Poisson process as arrival time and the
latter either by resampling with replacement or by sampling from an estimated
Cox model and an estimated censoring distribution.

\cc{
              In the survival analysis case with a proportional alternative, the chart is  based on
              $$
              R(t) = \log(\rho ) N (t )- (\rho - 1)\hat \Lambda (t ),
              $$
              After the time transformation of $N$ to the standard Poisson process $\tilde N$ this becomes
              \begin{align*}
                \tilde R(t) = R(\Lambda^{-1}(t))=\log(\rho)\tilde
                N(t)-(\rho-1)\hat\Lambda(\Lambda^{-1}(t))
              \end{align*}
              Thus the nice Markov-approximation will not work $\hat
              \Lambda$ and $\Lambda$ will not have independent
              increments. Therefore we needed to simulate.
}

\section{Conclusions and discussion}
\label{sec:conclusion}

We have presented a general approach for handling estimation error in
control charts with estimated parameters and unknown in-control
distributions. Our suggestion is, by bootstrap methods, to tune the
monitoring scheme to guarantee, with high probability, a certain
conditional in-control performance (conditional on the estimated
in-control distribution). If we apply a nonparametric bootstrap, the
approach is  robust against model specification error.

In our opinion, focusing on a guaranteed conditional in-control  performance is
generally more relevant than focusing on some average
performance, as an estimated chart usually is run for some time without
independent reestimation. Our approach can also easily be adapted to
make for instance bias adjustments. Bias adjustments, in
contrast to guaranteed performance, tend to
be substantially influenced by tail behaviour for heavy
tailed distributions which for instance the average run length has.
This implies that the bias adjustments need not be useful in the
majority of cases as the main effect of the adjustment is to adjust
the tail behaviour.

We have in particular demonstrated our approach for various variants
of Shewhart and CUSUM charts, but the general approach will
apply to other charts as well. The method is generally
relevant when the in-control distribution is unknown and the conditions of
Theorem~\ref{th:main} hold. We conjecture that this will be the case
for many of the most commonly used control charts.
\cc{for instance be the case for charts like  EWMA charts \citep{Roberts1959CCT}, general
likelihood ratio based charts \cite{Frisen1991Ops,Frisen2003SSO},  the
Sets method \citep{Chen1978SSC,Grigg2004ARA}}
Numerous extensions of control charts to other settings exist, for example
to other  regression
models, to  autocorrelated data, to multivariate data.
We do  conjecture that our approach will also apply  in
many of these settings.


\small

\appendix

\section{Proof of the main theorem}
\label{sec:proof}

The following extension of the functional delta method will help in the proof of
 Theorem \ref{th:main}.
\begin{lemma}
\label{le:ext_functional_Delta_method}
Suppose that $q:D_q\times \Xi \to E$ is Hadamard differentiable at $P\in D_q$ around $\xi\in \Xi$
tangentially to $D_0\subset D$ and that $\xi:D_q\to \Xi$ is continuous.
Let $\hat P$ be a sequence of random elements in $D_q$ such that
$$
  \sqrt{n}(\hat P-P)\leadsto Z \quad(n\to \infty),
$$
where $Z$ is some tight random element in $D_0$.
Then
$$
\sqrt{n}(q(\hat P;  \xi(\hat P))-q(P;\xi(\hat P))) \leadsto q'(P;\xi(P))Z.
$$
\end{lemma}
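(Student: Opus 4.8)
The plan is to recognise this as a mild generalisation of the functional delta method and to deduce it from the extended continuous mapping theorem (see, e.g., \citet[Theorem~1.11.1]{Vaart1996WCa}, or the corresponding result in \citet{Kosorok2008ItE}), the only new ingredient compared with the textbook statement being that the second argument of $q$ is the converging random element $\xi(\hat P)$ rather than a fixed point. Write $\xi:=\xi(P)$ and let $q':=q'(P;\xi):D_0\to E$ be the continuous linear derivative supplied by Definition~\ref{def:haddiffamily}. For each $n$ put $D_n:=\{h\in D: P+n^{-1/2}h\in D_q\}$ and define $g_n:D_n\to E$ by
$$
g_n(h)=\sqrt{n}\Bigl(q\bigl(P+n^{-1/2}h;\,\xi(P+n^{-1/2}h)\bigr)-q\bigl(P;\,\xi(P+n^{-1/2}h)\bigr)\Bigr).
$$
With $X_n:=\sqrt{n}(\hat P-P)$ we have $X_n\leadsto Z$ by assumption, and $P+n^{-1/2}X_n=\hat P\in D_q$, so $X_n$ takes values in $D_n$ and $g_n(X_n)=\sqrt{n}\bigl(q(\hat P;\xi(\hat P))-q(P;\xi(\hat P))\bigr)$. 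Hence it suffices to establish $g_n(X_n)\leadsto q'(Z)$.

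The key step, and the one I expect to need most care, is verifying the hypothesis of the extended continuous mapping theorem: for every sequence $(h_n)$ with $h_n\in D_n$ and $h_n\to h$ for some $h\in D_0$, one needs $g_n(h_n)\to q'(h)$. This is where the assumptions combine. Put $t_n:=n^{-1/2}\to 0$; then $P+t_nh_n\in D_q$ for all $n$ and $P+t_nh_n\to P$ in $D$, so by continuity of $\xi:D_q\to\Xi$ the parameters $\xi_n:=\xi(P+t_nh_n)$ converge to $\xi(P)=\xi$ in $\Xi$. Feeding the three sequences $(\xi_n)$, $(t_n)$, $(h_n)$ into Definition~\ref{def:haddiffamily} — they satisfy $P+t_nh_n\in D_q$, $\xi_n\to\xi$, $t_n\to 0$ and $h_n\to h\in D_0$ — the Hadamard differentiability of the family $\{q(\cdot;\xi'):\xi'\in\Xi\}$ at $P$ around $\xi$ tangentially to $D_0$ gives precisely
$$
g_n(h_n)=\frac{q(P+t_nh_n;\xi_n)-q(P;\xi_n)}{t_n}\longrightarrow q'(h)\qquad(n\to\infty).
$$
This is exactly the point where the ordinary functional delta method does not apply and the extended notion of Definition~\ref{def:haddiffamily}, together with the continuity of $\xi$, is used.

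Finally, since $q'$ is continuous on $D_0$ and $Z$ is a tight (hence separable and Borel measurable) random element taking values in $D_0$, the extended continuous mapping theorem applies to the maps $g_n$ and the sequence $X_n$ and yields $g_n(X_n)\leadsto q'(Z)$; as weak convergence is understood in the outer-expectation sense of \citet{Vaart1996WCa}/\citet{Kosorok2008ItE}, no measurability of the individual maps $g_n$ is required. Combining this with the identification $g_n(X_n)=\sqrt{n}\bigl(q(\hat P;\xi(\hat P))-q(P;\xi(\hat P))\bigr)$ from the first paragraph gives the claim $\sqrt{n}\bigl(q(\hat P;\xi(\hat P))-q(P;\xi(\hat P))\bigr)\leadsto q'(P;\xi(P))Z$. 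The main obstacle is really just the bookkeeping of the two arguments of $q$ in the differentiability step; everything else is the standard delta-method argument (cf.\ \citet[Theorem~3.9.4]{Vaart1996WCa}).
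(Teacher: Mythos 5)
Your proposal is correct and follows essentially the same route as the paper's own proof: rewrite the quantity as $g_n(\sqrt{n}(\hat P-P))$ with $g_n(h)=\sqrt{n}\bigl(q(P+n^{-1/2}h;\xi(P+n^{-1/2}h))-q(P;\xi(P+n^{-1/2}h))\bigr)$, use the continuity of $\xi$ together with the extended Hadamard differentiability of Definition~\ref{def:haddiffamily} to get $g_n(h_n)\to q'(P;\xi(P))(h)$, and conclude by the extended continuous mapping theorem \citep[Th 1.11.1]{Vaart1996WCa}.
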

\begin{proof}
\cc{We will apply the extended continuous mapping theorem \citep[Th
1.11.1]{Vaart1996WCa} to prove the result.}
Note that
$
\sqrt{n}(q(\hat P; \hat \xi)-q(P;\hat \xi))
=g_n(\sqrt{n}(\hat P - P)),
$
where
$g_n:\tilde D_n\to F$, $g_n(h)=\sqrt{n}[q(P+n^{-\frac{1}{2}}h;\xi(P+n^{-\frac{1}{2}}h))-q(P;\xi(P+n^{-\frac{1}{2}}h))]$
and $\tilde D_n=\{h\in D: P+n^{-\frac{1}{2}}h\in D_q\}$.

Let $h_n$ be a sequence such that $h_n\in \tilde D_n$ and  $h_n\to h$ for some $h\in D_0$.
Let $\xi_n=\xi(P+n^{-\frac{1}{2}}h_n)$.
The continuity of $\xi$ implies $\xi_n\to \xi(P)$.
Thus by the Hadamard differentiability of $q$ we get
  $g_n(h_n)\to q'(P;\xi(P))(h)$.
Using the  extended continuous mapping theorem  \citep[Th 1.11.1]{Vaart1996WCa}
finishes the proof.
\end{proof}

\begin{proof}[Proof of Theorem \ref{th:main}]
Let $\tilde Z_1$ and $\tilde Z_2$ be independent copies of $Z$.
Arguing as in the first part of the  proof of \cite[Theorem 12.1\cc{(p.236/237)}]{Kosorok2008ItE} one can see that unconditionally
$$
\sqrt{n}
\left(
\begin{pmatrix}
  \hat P^{\ast}\\
\hat P\end{pmatrix}
-
\begin{pmatrix}
P\\
P
\end{pmatrix}
\right)
\leadsto
\begin{pmatrix}
  \tilde Z_1+\tilde Z_2\\
\tilde Z_2
\end{pmatrix}.
$$
Applying Lemma
\ref{le:ext_functional_Delta_method} to this with the mappings
 $(x,y;\xi)\mapsto
(q(x;\xi),q(y;\xi),x,y)$  and  $(x,y)\mapsto \xi(x)$ gives
$$
\sqrt{n}
\begin{pmatrix}
  q(\hat P^{\ast};\hat\xi^{\ast})-q(P;\hat \xi^{\ast})\\
  q(\hat P;\hat\xi^{\ast})-q(P;\hat \xi^{\ast})\\
  \hat P^{\ast}-P\\
\hat P-P
\end{pmatrix}
\leadsto
\begin{pmatrix}
  q'(P;\xi)(\tilde Z_1+\tilde Z_2)\\
  q'(P;\xi)(\tilde Z_2)\\
\tilde Z_1+\tilde Z_2\\
\tilde Z_2
\end{pmatrix}.
$$

 After that one can argue
exactly as in the remainder of the  proof of \cite[Theorem 12.1]{Kosorok2008ItE},
p.237, to show that
\begin{equation}
  \label{eq:condbootstrapconv}
 A_n:= \sqrt{n}(q(\hat P^{\ast};\xi(\hat P^{\ast}))-q(\hat P;\xi(\hat P^{\ast}))\condweakconv{\hat P} G \quad(n\to \infty)
\end{equation}
for  $G= q'(P;\xi)Z$. \cc{We need condition e) here.}
Furthermore, Lemma \ref{le:ext_functional_Delta_method} shows that
\begin{equation}
\label{eq:weakconv}
B_n:=\sqrt{n}(q(\hat P; \hat \xi)-q(P;\hat \xi))
\leadsto
 G.
\end{equation}

Similarly to the ideas in \cite[Lemma 23.3]{Vaart1998AS}, one can show
that (\ref{eq:condbootstrapconv}) and (\ref{eq:weakconv}) imply the
correct coverage probabilities.  Indeed, for any subsequence there is a further  subsequence
such that  $A_n\leadsto G$ a.s.\ conditionally on $\hat P$. Using
\cite[Lemma 21.2]{Vaart1998AS}, we get $F^{-1}_{A_n}\leadsto F^{-1}_G$ along this subsequence,
where $F^{-1}$ denotes the quantile function of the random variable in
the subscript, i.e.~$F^{-1}_G(x)=\inf\{t\in \R: P(G\leq t)\geq x\}$.
Thus for any continuity point $\beta$ of $F_G^{-1}$, we get
$
F^{-1}_{A_n}(\beta)\to F^{-1}_G(\beta)\quad a.s.
$ along the subsequence.
Thus overall, we have
$$
F^{-1}_{A_n}(\beta)\stackrel{\Prob}{\to} F^{-1}_G(\beta).
$$
By Slutsky's lemma and (\ref{eq:weakconv}),
$
B_n-F^{-1}_{A_n}(\beta)\leadsto G-F^{-1}_G(\beta).
$
Thus, as $G$ is continuous, \cc{implying $G-F^{-1}_{G}$ is continuous at $0$}
\begin{equation}
  \label{eq:covprobgeneral}
\Prob(B_n\leq F^{-1}_{A_n}(\beta))\to \Prob(G\leq F^{-1}_G(\beta))=\beta.
\end{equation}
This holds for all $\beta$, because there are at most countably many
points $\beta$ at which $F^{-1}_G$ is not continuous, because both the
left- and the right-hand side of (\ref{eq:covprobgeneral}) are
monotone in $\beta$, and because the right-hand side is continuous.
As $p^{\ast}_{\alpha}=F^{-1}_{A_n}(\alpha)/\sqrt{n}$,
(\ref{eq:covprobgeneral}) implies
$$
\Prob(q(P;\hat \xi)<  q(\hat P; \hat \xi)-p_{\alpha}^{\ast})=1-\Prob(q(\hat P; \hat \xi)-q(P;\hat \xi)\leq p_{\alpha}^{\ast})
=1-\Prob(B_n\leq F^{-1}_{A_n}(\alpha))\to 1-\alpha.
$$
\cc{need condition f) for the left hand side to make sense}%
\end{proof}

\section{Proofs for Hadamard differentiability}

The main goal of this section is to prove Lemmas
\ref{le:HaddiffCUSUMhit} and \ref{le:HaddiffCUSUMhitNORMAL}. Before we
do this in Appendix \ref{sec:haddifhitprobex}, we first show a chain
rule in Appendix \ref{sec:chain-rule}, then  a lemma about
the uniform Hadamard differentiability of inverse maps in Appendix
\ref{sec:unif-hadam-diff}. After that we show general differentiability
of hitting probability in CUSUM charts with respect to the updating
distribution in Appendix \ref{sec:diff-hitt-prob}. The results in
\ref{sec:chain-rule}-\ref{sec:diff-hitt-prob} may also be useful in
other situations.

\subsection{Chain rule}
\label{sec:chain-rule}
In this section we present a chain rule for Hadamard differentiable functions.
For this we need the following stronger version of  Hadamard differentiability.

\begin{definition}
  Let $D,E$ be metric spaces. \cc{i.e. vector spaces with a metric} A
  function $\phi:D_{\phi}\subset D \to E$ is called \emph{uniformly
    Hadamard differentiable at $\theta\in D_{\theta}$ along $d:D\times
    D\to\R$
    \cc{we need this for the inverse mapping - in many cases
      we can set $d=0$}
     tangentially to $D_0\subset D$} if there
  exists a linear  map $\phi'_{\theta}:D_0\to E$ such that
$$
\frac{\phi(\theta_n+t_nh_n)-\phi(\theta_n)}{t_n}
\to \phi'_{\theta}(h)
$$
for all $\theta_n\to \theta$ with $d(\theta_n,\theta)\to 0$\cc{for inverse mappings - need the derivative of the functions to converge}, $t_n\to
0$ and all converging sequences $(h_n)$ with $h_n\to h\in D_0$ and
$\theta_n+t_nh_n\in D_{\phi}$.
  \end{definition}
\cc{similar to \cite[p.375]{Vaart1996WCa}}

\begin{lemma}[Chain rule]
\label{le:chainrule}
Let $D$, $E$, $F$ be metric spaces and let $H$ be a non-empty set.
Let $\{f_\xi:D_f \to E: \xi\in \Xi\}$ be a family of functions that is
Hadamard differentiable at $\theta\in D_f$ around $\xi\in \Xi$
tangentially to $D_0\subset D$.
  Let $\phi:E_{\phi}\to F$ be uniformly
Hadamard differentiable at $f_{\xi}(\theta)$ along $d: E_{\phi}\times
E_{\phi}\to \R$ tangentially to $f_{\xi}'(\theta;\xi)(D_0)$.
Furthermore, suppose that $\xi_n\to \xi$ implies $d(f(\theta;\xi_n),
f(\theta;\xi))\to 0$.
  Then $\{\phi\circ f_\xi:D_f \to F: \xi\in
\Xi\}$  is Hadamard differentiable at $\theta$ around $\xi\in \Xi$
tangentially to $D_0$.
\end{lemma}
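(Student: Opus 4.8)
The plan is to verify Definition~\ref{def:haddiffamily} directly for the composition $\phi\circ f_\xi$ by "chaining" the two difference quotients, exactly as in the classical chain rule for Hadamard derivatives but carrying along the extra parameter $\xi_n\to\xi$. So fix sequences $(\xi_n)\subset\Xi$, $(t_n)\subset\R$, $(h_n)\subset D$ with $\theta+t_nh_n\in D_f$ for all $n$, and $\xi_n\to\xi$, $t_n\to 0$, $h_n\to h\in D_0$. I want to show
$$
\frac{\phi(f_{\xi_n}(\theta+t_nh_n))-\phi(f_{\xi_n}(\theta))}{t_n}\to (\phi\circ f_\xi)'(\theta;\xi)(h),
$$
and along the way this will identify the derivative as $\phi'_{f_\xi(\theta)}\circ f'_\xi(\theta;\xi)$, which is a composition of continuous linear maps and hence continuous and linear, as required by the definition.

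First I would set $\eta_n := \dfrac{f_{\xi_n}(\theta+t_nh_n)-f_{\xi_n}(\theta)}{t_n}$; by the Hadamard differentiability of the family $\{f_\xi\}$ at $\theta$ around $\xi$ tangentially to $D_0$, we have $\eta_n\to \eta := f'_\xi(\theta;\xi)(h)\in f'_\xi(\theta;\xi)(D_0)$. Next, write the numerator of the target quotient as $\phi(\theta_n' + t_n\eta_n) - \phi(\theta_n')$ with $\theta_n' := f_{\xi_n}(\theta)$, and divide by $t_n$. To apply the definition of uniform Hadamard differentiability of $\phi$ at $f_\xi(\theta)$ along $d$ tangentially to $f'_\xi(\theta;\xi)(D_0)$, I need three things: (i) $\theta_n'\to f_\xi(\theta)$ in $E$; (ii) $d(\theta_n', f_\xi(\theta))\to 0$; (iii) $\eta_n\to\eta$ with $\eta$ in the tangent set, and $\theta_n'+t_n\eta_n\in E_\phi$. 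Point (iii) is already in hand (the last inclusion holds since $\theta_n'+t_n\eta_n = f_{\xi_n}(\theta+t_nh_n)\in f_{\xi_n}(D_f)\subset E_\phi$, assuming as is implicit that $\phi$ is defined on the range of the $f_\xi$). Point (ii) is exactly the extra hypothesis "$\xi_n\to\xi$ implies $d(f(\theta;\xi_n),f(\theta;\xi))\to 0$". For point (i), continuity of $f_\xi(\cdot)$ in $\xi$ at the point $\theta$ — which is implied by Hadamard differentiability of the family, taking $t_n\to 0$ and $h_n\to h=0$ (so $f_{\xi_n}(\theta+t_nh_n)-f_{\xi_n}(\theta)\to 0$, but I actually want $f_{\xi_n}(\theta)\to f_\xi(\theta)$) — gives the convergence; alternatively this is part of what the hypotheses should be read to supply. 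Then the uniform Hadamard differentiability of $\phi$ yields $\dfrac{\phi(\theta_n'+t_n\eta_n)-\phi(\theta_n')}{t_n}\to\phi'_{f_\xi(\theta)}(\eta) = \phi'_{f_\xi(\theta)}(f'_\xi(\theta;\xi)(h))$, which is the claim.

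The main obstacle I anticipate is bookkeeping rather than depth: making sure the convergence $\theta_n'\to f_\xi(\theta)$ genuinely follows from the stated hypotheses (it is the reason the paper phrases $\phi$'s differentiability in the \emph{uniform} sense — one perturbs $\phi$ around a moving base point $\theta_n'$, not the fixed $f_\xi(\theta)$), and checking that the tangent-set condition is respected, i.e.\ that every $\eta_n$ produced as a difference quotient of $f$ is an admissible perturbation direction for $\phi$ (it is, since $\eta_n = (f_{\xi_n}(\theta+t_nh_n)-f_{\xi_n}(\theta))/t_n$ need not itself lie in the tangent set $f'_\xi(\theta;\xi)(D_0)$, but only its limit $\eta$ must — and the definition of uniform Hadamard differentiability only requires $h_n\to h\in D_0$, not $h_n\in D_0$). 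Once those two points are pinned down, the argument is a one-line concatenation of the two limit statements.
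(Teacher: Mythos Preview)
Your approach is essentially identical to the paper's: set $k_n=(f_{\xi_n}(\theta+t_nh_n)-f_{\xi_n}(\theta))/t_n$, use the Hadamard differentiability of the family $\{f_\xi\}$ to get $k_n\to f'(\theta;\xi)(h)$, rewrite the target quotient as $(\phi(f_{\xi_n}(\theta)+t_nk_n)-\phi(f_{\xi_n}(\theta)))/t_n$, and invoke the uniform Hadamard differentiability of $\phi$ at the moving base point $f_{\xi_n}(\theta)$. The paper's proof is a three-line version of exactly this, and it does not address your point (i) (the metric convergence $f_{\xi_n}(\theta)\to f_\xi(\theta)$ in $E$) either; your hesitation there is well founded, but the paper simply takes it for granted, and it holds in all the applications considered.
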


\begin{proof}
Let  $(\xi_n)\subset \Xi$, $(t_n)\subset \R$, $(h_n)\subset D$
satisfying $\theta+t_nh_n\in D_f \,\forall n$ and $\xi_n\to \xi$, $t_n\to 0$, $h_n\to h\in D_0$ as $n\to \infty$.
Let   $k_n=\frac{f_{\xi_n}(\theta+t_nh_n)-f_{\xi_n}(\theta)}{t_n}$. Hadamard differentiability of $f$ implies
$k_n\to q'_{\xi}(h)$.
Then by uniform Hadamard differentiability of $\phi$,
\[
\frac{\phi(f_{\xi_n}(\theta+t_nh_n))-
\phi(f_{\xi_n}(\theta))}{t_n}=
\frac{\phi(f_{\xi_n}(\theta)+t_nk_n)-
\phi(f_{\xi_n}(\theta))}{t_n}
\to \phi'_{f_{\xi}(\theta)}(q'_{\xi}(\theta)(h)).
\]\qedhere
\end{proof}

\subsection{Uniform Hadamard differentiability of the inverse map}
\label{sec:unif-hadam-diff}

Let $D_{\phi}$ be the set of non-decreasing functions in $D[u,v]$, for some $ -\infty<u< v<\infty$,
 that cross $\beta\in \R$, i.e.
$$D_{\phi}=\{F\in D[u,v]: F \text{ non-decreasing}, \exists x\in (u,v]: F(x-)\leq \beta\leq F(x)\}.$$

Suppose that $F\in D_{\phi}$ and $G\in D_{\phi}$ are differentiable on
$[u,v]$ with derivatives $f$ and $g$.  Let $d(F,G)=\sup_{x\in
  [u,v]}|f(u)-g(u)|$.  If either $F$ or $G$ are not differentiable on
$[u,v]$ then we set $d(F,G)=\infty$.

Let $\phi:D_{\phi}\to \R, \phi(F)=\inf\{x:F(x)\geq \beta\}$, the first point at which the function crosses the threshold.

\begin{lemma}
\label{le:Haddiffinversemap}
  Let $\theta\in D_{\phi}$ such that $\theta$ is differentiable on
  $[u,v]$ with continuous bounded positive derivative. Then $\phi$ is
  uniformly Hadamard differentiable at $\theta$ along $d$ tangentially to
  $C[u,v]$.
\end{lemma}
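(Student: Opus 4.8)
The plan is to show that the difference quotient
$$\frac{\phi(\theta_n + t_n h_n) - \phi(\theta_n)}{t_n}$$
converges to $-h(\phi(\theta))/\theta'(\phi(\theta))$ for all sequences $\theta_n \to \theta$ with $d(\theta_n,\theta)\to 0$, $t_n\to 0$, and $h_n \to h \in C[u,v]$ with $\theta_n + t_n h_n \in D_\phi$. First I would set $x_0 = \phi(\theta)$ and note that since $\theta$ has a continuous bounded positive derivative on $[u,v]$, $\theta$ is strictly increasing, so $x_0$ is the unique point with $\theta(x_0) = \beta$, and $\theta'(x_0) > 0$. Write $x_n = \phi(\theta_n + t_n h_n)$. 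The first substantive step is to prove $x_n \to x_0$: the functions $\theta_n + t_n h_n$ converge uniformly to $\theta$ (since $d(\theta_n,\theta)\to 0$ forces uniform convergence of the functions themselves via the mean value theorem, and $t_n h_n \to 0$ uniformly because $h_n$ is uniformly bounded), and uniform convergence of monotone functions to a strictly monotone continuous limit gives convergence of the crossing points.

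The core of the argument is then a Taylor-type expansion. By definition of $x_n$ as an infimum of a crossing set, together with right-continuity, one has $(\theta_n + t_n h_n)(x_n) \geq \beta$ and $(\theta_n + t_n h_n)(x_n-) \leq \beta$; since $\theta_n + t_n h_n$ is actually differentiable (otherwise $d = \infty$), it is continuous, so $(\theta_n + t_n h_n)(x_n) = \beta = \theta(x_0)$. Subtracting, I would write
$$0 = \theta_n(x_n) - \theta(x_0) + t_n h_n(x_n) = [\theta_n(x_n) - \theta_n(x_0)] + [\theta_n(x_0) - \theta(x_0)] + t_n h_n(x_n).$$
For the first bracket, apply the mean value theorem to $\theta_n$ to get $\theta_n'(\zeta_n)(x_n - x_0)$ for some $\zeta_n$ between $x_n$ and $x_0$; since $d(\theta_n,\theta)\to 0$ means $\theta_n' \to \theta'$ uniformly and $\zeta_n \to x_0$, we have $\theta_n'(\zeta_n) \to \theta'(x_0) > 0$. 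For the third term, $h_n(x_n) \to h(x_0)$ by uniform convergence $h_n \to h$ and $x_n \to x_0$ with $h$ continuous. This yields $x_n - x_0 = -\,\dfrac{(\theta_n(x_0) - \theta(x_0)) + t_n h_n(x_n)}{\theta_n'(\zeta_n)}$, and dividing by $t_n$,
$$\frac{\phi(\theta_n + t_n h_n) - \phi(\theta_n)}{t_n} = \frac{x_n - \phi(\theta_n)}{t_n},$$
so I would also need the same expansion applied to $\theta_n + t_n h_n$ replaced by $\theta_n$ (i.e. $h_n \equiv 0$), giving $\phi(\theta_n) - x_0 = -(\theta_n(x_0) - \theta(x_0))/\theta_n'(\zeta_n')$ for some other intermediate point. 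Subtracting these two expansions, the common $\theta_n(x_0) - \theta(x_0)$ terms nearly cancel — here one must be slightly careful that the two intermediate points $\zeta_n, \zeta_n'$ both tend to $x_0$ so their derivative values agree in the limit — and the leading term that survives is $-t_n h_n(x_n)/\theta_n'(\zeta_n)$. Dividing by $t_n$ and passing to the limit gives the claimed derivative $\phi'_\theta(h) = -h(x_0)/\theta'(x_0)$, which is evidently linear in $h$.

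The main obstacle I anticipate is the cancellation step: $\phi(\theta_n)$ and $\phi(\theta_n + t_n h_n)$ are each at distance roughly $O(\|\theta_n - \theta\|)$ from $x_0$, and a priori $\|\theta_n - \theta\|$ is not controlled relative to $t_n$ — it could be much larger than $t_n$. So one cannot simply expand each around $x_0$ and subtract term by term without tracking that the two intermediate points produce the \emph{same} value of the derivative in the limit; this is exactly why the hypothesis is stated as convergence of derivatives ($d(\theta_n,\theta)\to 0$) rather than merely $\|\theta_n - \theta\|_\infty \to 0$, and why the definition of uniform Hadamard differentiability is along the metric $d$. The clean way to handle it is to write $\phi(\theta_n + t_n h_n) - \phi(\theta_n)$ directly by applying the mean value theorem once to the function $\theta_n$ between the two points $\phi(\theta_n)$ and $\phi(\theta_n + t_n h_n)$: since $(\theta_n + t_nh_n)(\phi(\theta_n+t_nh_n)) = \beta = \theta_n(\phi(\theta_n))$, we get $\theta_n(\phi(\theta_n+t_nh_n)) - \theta_n(\phi(\theta_n)) = -t_n h_n(\phi(\theta_n+t_nh_n))$, so $\theta_n'(\eta_n)\,(\phi(\theta_n+t_nh_n) - \phi(\theta_n)) = -t_n h_n(\phi(\theta_n+t_nh_n))$ with $\eta_n$ between the two crossing points, hence $\eta_n \to x_0$ and $\theta_n'(\eta_n)\to\theta'(x_0)$; dividing by $t_n \theta_n'(\eta_n)$ finishes it. This bypasses the cancellation entirely and is the route I would actually write up.
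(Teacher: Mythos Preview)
Your overall strategy --- apply the mean value theorem to $\theta_n$ between the two crossing points $\phi(\theta_n)$ and $\phi(\theta_n+t_nh_n)$ --- is the same idea the paper uses, and the derivative you identify, $-h(\phi(\theta))/\theta'(\phi(\theta))$, is correct. However, there is a genuine gap in the step where you assert $(\theta_n+t_nh_n)(x_n)=\beta$. Your justification is that ``$\theta_n+t_nh_n$ is actually differentiable (otherwise $d=\infty$)'', but the hypothesis $d(\theta_n,\theta)\to 0$ only forces $\theta_n$ to be differentiable; it says nothing about $h_n$. The $h_n$ are arbitrary elements of $D[u,v]$ converging in sup norm to a continuous limit, so they can (and generally will) have jumps, and hence so can $\theta_n+t_nh_n$. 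At a jump you only get the sandwich $(\theta_n+t_nh_n)(x_n-)\leq\beta\leq(\theta_n+t_nh_n)(x_n)$, not equality. Your ``clean'' route, which relies on the exact identity $\theta_n(x_n)-\theta_n(\phi(\theta_n))=-t_nh_n(x_n)$, therefore breaks down as written.

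The paper handles exactly this point by working with the two-sided bound $(\theta_n+t_nh_n)(\xi_n-\epsilon_n)\leq\beta\leq(\theta_n+t_nh_n)(\xi_n)$ for a sequence $\epsilon_n=o(t_n)$, applying the mean value theorem separately on each side, and then squeezing. Your argument can be repaired the same way: from the two inequalities and continuity of $\theta_n$ you obtain $-t_nh_n(x_n)\leq \theta_n(x_n)-\theta_n(\phi(\theta_n))$ and $\theta_n(x_n-\epsilon_n)-\theta_n(\phi(\theta_n))\leq -t_nh_n(x_n-\epsilon_n)$; since $h_n(x_n)\to h(x_0)$ and $h_n(x_n-\epsilon_n)\to h(x_0)$ (uniform convergence to a continuous limit), and $\theta_n'(\eta_n)\to\theta'(x_0)$, both bounds give the same limit after dividing by $t_n$. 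A smaller point: you write that ``$d(\theta_n,\theta)\to 0$ forces uniform convergence of the functions themselves via the mean value theorem''; this is not right (derivatives can agree while the functions differ by a constant), but it is also unnecessary, since $\theta_n\to\theta$ in sup norm is already part of the hypothesis in the definition of uniform Hadamard differentiability.
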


\begin{proof}
  \cc{This proof extends the proof given in
    \cite[p.385]{Vaart1996WCa}. }

Let $\xi=\phi(\theta)$.
Let $(h_n)\subset D[u,v]$ such that $h_n\to h\in C[u,v]$. Let $(t_n)\subset [0,\infty)$ such that $t_n\to 0$.
Let $(\theta_n)\subset D_{\phi}$ such that $\theta_n\to \theta$ \cc{wrt the uniform norm} and $d(\theta_n,\theta)\to 0$. \cc{The latter means   $\theta_n'\to\theta'$ uniformly.}
Let $\xi_{ n }= \phi(\theta_n+t_nh_n)$. By the definition of $\phi$, we have
\begin{equation}
  \label{eq:invproof_definv}
(\theta_n+t_nh_n)(\xi_n-\epsilon_n)\leq \beta \leq (\theta_n+t_nh_n)(\xi_n).
\end{equation}
for every $\epsilon_n>0$. Let  $(\epsilon_n)$ be positive and such that
$\epsilon_n=o(t_n)$. \cc{Could choose $\epsilon_n=t_n^2$.}

First, we show  $\xi_n\to \xi$. The
sequence $(h_n)$ is uniformly bounded because $h_n\to h$ and because
$h$ is bounded.  Thus,
$$
\theta_n(\xi_n-\epsilon_n)+O(t_n)\leq \beta \leq \theta_n(\xi_n)+O(t_n).
$$
As $t_n\to 0$ and $\theta_n\to \theta$\cc{which is uniform convergence},
$$
\theta(\xi_n-\epsilon_n)+o(1)\leq \beta \leq \theta(\xi_n)+o(1).
$$
For every $\delta>0$, the function $\theta$ is  bounded away from $\beta$ outside $(\xi-\delta, \xi+\delta)$. Furthermore, $\theta$ is strictly increasing.
Thus, to satisfy the previous display we must have eventually  $\xi_n\geq \xi-\delta$ and
$\xi_n-\epsilon_n\leq \xi+\delta$, which implies $\xi_n\to \xi$.

 Let $\tilde \xi_n=\phi(\theta_n)$. Using the mean value theorem  in (\ref{eq:invproof_definv}) gives
\begin{equation*}
\theta_n(\tilde \xi_n)+(\xi_n-\epsilon_n-\tilde \xi_n)\theta_n'(\rho_{1n})   +t_nh_n(\xi_n-\epsilon_n)\leq
\beta \leq
\theta_n(\tilde \xi_n)+(\xi_n-\tilde \xi_n)\theta_n'(\rho_{2n})   +t_nh_n(\xi_n)
\end{equation*}
for some $\rho_{1n}$ between  $\xi_n-\epsilon_n$ and $\tilde \xi_n$ and for some $\rho_{2n}$ between $\xi_n$ and $\tilde \xi_n$. \cc{\cite[p.385]{Vaart1996WCa} use a Taylor approximation at this point. However, for this they would need the second derivative of to exist - which they and we do not have. I suppose that \cite{Vaart1996WCa} really want to use the definition of differentiability at this point.}.
Rewriting this using  $\theta_n(\tilde \xi_n)=\beta$ gives
\begin{equation*}
(\xi_n-\tilde \xi_n)\theta_n'(\rho_{1n})   +t_nh_n(\xi_n-\epsilon_n)
-\epsilon_n\theta_n'(\rho_{1n})\leq
0 \leq
(\xi_n-\tilde \xi_n)\theta_n'(\rho_{2n})   +t_nh_n(\xi_n).
\end{equation*}
By the uniform convergence of $h_n$ and the continuity of $h$, we have
$h_n(\xi_n-\epsilon_n)\to h(\xi)$ and  $h_n(\xi_n)\to h(\xi)$.
Using this, the fact that
we have chosen $\epsilon_n$ such that $\epsilon_n=o(t_n)$ and that   $\theta_n'$ is uniformly bounded \cc{this follows because $\theta_n'\to \theta'$ uniformly and because $\theta$ is continuous and hence bounded},
we get
 \begin{equation*}
(\xi_n-\tilde \xi_n)\theta_n'(\rho_{1n})
-o(t_n)\leq
-t_nh(\xi) \leq
(\xi_n-\tilde \xi_n)\theta_n'(\rho_{2n})   +o(t_n).
\end{equation*}
Hence, \cc{using that $\theta'_n$ is uniformly bounded from below - which follows through uniform convergence from the corresponding property of $\theta'$}
 \begin{equation*}
-\frac{h(\xi)}{\theta_n'(\rho_{2n})} -o(1)\leq \frac{\xi_n-\tilde \xi_n}{t_n}
\leq
-\frac{h(\xi)}{\theta_n'(\rho_{1n})}+o(1).
\end{equation*}
 We have already shown
$\xi_n\to \xi$ which implies $\rho_{1n}\to \xi$ and  $\rho_{2n}\to \xi$. Using the assumptions
that $\theta_n'\to \theta$ uniformly and that $\theta'$ is continuous shows
 $\theta_n'(\rho_{1n})\to \theta'(\xi)$
and  $\theta_n'(\rho_{2n})\to \theta'(\xi)$, which finishes the proof.
\cc{The alternative
  approach of using the theorem about Hadamard differentiability in
  vdVaart and adapting the lemma about sufficient conditions of
  uniform Hadamard differentiability does not seem to work. Mainly because
   it will require differentiability at $\theta_n+t h_n$, which we
  may not have}%
\end{proof}

\subsection{Differentiability of the  hitting probability with respect to  the updating distribution}
\label{sec:diff-hitt-prob}

\cc{The next section essentially looks at the middle mapping of a chain of mappings that we are interested in.
$(P,\xi)\mapsto F \mapsto (\text{hit}_c)_{c>0} \mapsto c_{\text{hit}}$. We need to show differentiability of the first step (which will be different for the parametric and the nonparametric and the regression approach.
}

We are interested in hitting probabilities for CUSUM charts within the
first $T\in \N$ steps. We will show that the mapping from the
distribution of the updates $Y_i$ to the hitting probabilities is
uniformly Hadamard differentiable. The $Y_i$ are the adjusted observations, e.g.~
in the notation of Section \ref{sec:theor:ex:CUSUM:nonpar} they are
$Y_i=\frac{X_i-\xi_1-\Delta/2}{\xi_2}$, where $X_i$ is the observed value.

Let $D_\phi$ be the set of cumulative distribution functions on $\R$,
considered as a subset of $D=l_{\infty}(\R)$ equipped with the uniform
norm.  Consider the mapping
$$
\phi:D_\phi\to l_{\infty}(\R),
q(F)(c)=P(\text{hit threshold c within T})= \int g(y,c)d F(y_1) \dots dF(y_T),
$$
where $g(y,c)=1\left(m(y)\geq c\right)$ with $1(\cdot)$ being the
indicator function and $m(y)$ is as defined in Section
\ref{sec:theor:ex:CUSUM:nonpar}.

\begin{lemma}
\label{le:diffhitprob}
$\phi$ is uniformly Hadamard differentiable tangentially to
$D_0=\{H\in C(\R):\lim_{t\to \infty}H(t)=\lim_{t\to-\infty}H(t)=0\}$
with derivative
$$
\phi'(F)(H)(c)
=\sum_{i=1}^T\int g(y,c)\left(\prod_{j\neq i} dF(y_j)\right)dH(y_i).
$$
\end{lemma}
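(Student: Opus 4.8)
\section*{Proof proposal for Lemma \ref{le:diffhitprob}}

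The plan is to combine an \emph{exact} telescoping expansion of the $T$-fold product integral with an integration by parts in a single coordinate, and then pass to the limit; the only genuinely delicate point will be uniformity in the threshold $c$.

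Two structural facts come first. \emph{(i) The derivative is well defined.} Since $H\in D_0$ need not have bounded variation, the symbol $\int g(y,c)\bigl(\prod_{j\neq i}dF(y_j)\bigr)dH(y_i)$ must be read, via integration by parts, as $-\int H(u)\,d\psi^{(i)}_{F,c}(u)$, where $\psi^{(i)}_{F,c}(u):=\int g((y_1,\dots,u,\dots,y_T),c)\prod_{j\neq i}dF(y_j)$; this makes $H\mapsto\phi'(F)(H)$ linear with $\|\phi'(F)(H)\|_\infty\le T\|H\|_\infty$, hence continuous. \emph{(ii) Monotonicity and continuity of the CUSUM maximum.} From the recursion $R_i(y)=\max(R_{i-1}(y)+y_i,0)$, $R_0=0$, a short induction shows each $R_i$, hence $m=\max_{i\le T}R_i$, is continuous and nondecreasing in every coordinate, tends to $+\infty$ as any single coordinate $\to+\infty$, and has a finite limit as any single coordinate $\to-\infty$. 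Consequently $\{u:m(\dots,u,\dots)\ge c\}$ is a closed half-line $[\beta_i(y_{-i},c),\infty)$ (with $\beta_i=-\infty$ allowed), so $\psi^{(i)}_{F,c}$ is exactly the c.d.f.\ of $\beta_i(Y_{-i},c)$ under $Y_{-i}\sim F^{\otimes(T-1)}$, and $\phi'(F)(H)(c)=-\sum_{i=1}^T\E_{F^{\otimes(T-1)}}[H(\beta_i(Y_{-i},c))]$.

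Now fix $F\in D_\phi$ and take $F_n\to F$, $t_n\to0$, $h_n\to h\in D_0$ with $G_n:=F_n+t_nh_n\in D_\phi$, so that $\|F_n-F\|_\infty\to0$ and $\|G_n-F\|_\infty\to0$. The telescoping identity $\prod a_j-\prod b_j=\sum_i(\prod_{j<i}a_j)(a_i-b_i)(\prod_{j>i}b_j)$, applied to the product measures, gives, with \emph{no remainder term},
\[
\phi(G_n)(c)-\phi(F_n)(c)=\sum_{i=1}^T\int g(y,c)\Big(\prod_{j<i}dG_n(y_j)\Big)d(G_n-F_n)(y_i)\Big(\prod_{j>i}dF_n(y_j)\Big).
\]
For each $i$ I would integrate out the coordinates $j\neq i$ by Fubini, leaving the \emph{monotone} function $u\mapsto\psi^{(i)}_{n,c}(u)$ (built now from $\prod_{j<i}\mu_{G_n}\otimes\prod_{j>i}\mu_{F_n}$), then integrate by parts in $u$ (the boundary terms vanish because $(G_n-F_n)(\pm\infty)=0$), and divide by $t_n$. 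This yields the exact identity
\[
\frac{\phi(G_n)(c)-\phi(F_n)(c)}{t_n}=-\sum_{i=1}^T\E_{\mu_{G_n}^{\otimes(i-1)}\otimes\mu_{F_n}^{\otimes(T-i)}}\!\bigl[h_n\bigl(\beta_i(Y_{-i},c)^-\bigr)\bigr].
\]
For each fixed $c$ this converges to $\phi'(F)(h)(c)$: the mixed product measures converge weakly to $F^{\otimes(T-1)}$, $\|h_n-h\|_\infty\to0$ (so $|h_n(u^-)-h(u)|=|h_n(u^-)-h(u^-)|\le\|h_n-h\|_\infty$ using continuity of $h$), and $y_{-i}\mapsto h(\beta_i(y_{-i},c))$ is bounded by $\|h\|_\infty$.

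The main obstacle is upgrading this to convergence in $\ell^\infty(\R)$, i.e.\ uniformly in $c$ (this is what ``Hadamard differentiable'' requires). Both sides vanish on $(-\infty,0]$ because $m\ge0$, so only $c>0$ matters. On a tail $\{c>M\}$ I would use $m(y)\le\sum_j|y_j|$ together with tightness of the weakly convergent family of mixed product measures, and the fact that $\sup_n\sup_{u\ge M'}|h_n(u)|\to0$ as $M'\to\infty$ (from $\|h_n-h\|_\infty\to0$ for $n$ large plus $h_n(u)\to0$ as $u\to\infty$ for the finitely many small $n$), to make $\sup_{c>M}|\cdot|$ small uniformly in $n$. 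On the compact range $[0,M]$ I would argue by subsequences: if uniformity failed there would be $\delta>0$ and $c_n$ (pass to a subsequence with $c_n\to c^*$ monotonically) with $\bigl|Q_n(c_n)-\phi'(F)(h)(c_n)\bigr|\ge\delta$, where $Q_n(c):=[\phi(G_n)(c)-\phi(F_n)(c)]/t_n$; since $\beta_i(\cdot,c_n)$ then converges monotonically, both $Q_n(c_n)$ and $\phi'(F)(h)(c_n)$ converge to the same one-sided limit $-\sum_i\E_{F^{\otimes(T-1)}}[h(\beta_i(Y_{-i},c^{*\mp}))]$ --- the former via $\|h_n-h\|_\infty\to0$ and an extended continuous-mapping argument using tightness, the latter by dominated convergence --- a contradiction. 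The remaining measure-theoretic points (behaviour of $\beta_i$ at $\pm\infty$, and the exceptional set where $\beta_i(\cdot,c)$ is discontinuous, which is $F^{\otimes(T-1)}$-null when $F$ is nonatomic and needs slightly more care otherwise) are routine once one has the half-line description of $\{m\ge c\}$ in coordinate $i$; the substantive work is this tail-and-compactness control of the supremum over $c$.
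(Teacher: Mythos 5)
Your exact one-coordinate telescoping identity (which has no remainder, in contrast to the paper's multinomial expansion of $(F_n+t_nH_n)^{\otimes T}$, whose $|I|\geq 2$ terms must be killed separately by the bound $t_n^{|I|-1}\|H_n\|_\infty^{|I|}$) and your integration-by-parts reading of the derivative are both correct, and up to that point you are close to the paper. The genuine gap is in how you pass to the limit in the resulting first-order term. You rewrite it as $-\sum_i\int h_n\bigl(\beta_i(y_{-i},c)^-\bigr)\,d\mu_n(y_{-i})$ and invoke convergence of the mixed product measures plus a continuous-mapping argument. That step needs $y_{-i}\mapsto\beta_i(y_{-i},c)$ to be continuous $F^{\otimes(T-1)}$-almost everywhere, which you defer as ``routine''. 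It is not: the lemma must hold at an \emph{arbitrary} cdf $F$ (Lemma \ref{le:HaddiffCUSUMhit} applies it at any $P\in D_q$ for the $\hit$ part, and in the nonparametric application the relevant distributions are empirical, hence purely atomic), and the discontinuity set of $\beta_i(\cdot,c)$ can carry positive $F$-mass. Already for $T=2$ one has $\beta_2(y_1,c)=c-\max(y_1,0)$ when $\max(y_1,0)<c$ and $-\infty$ otherwise, so $F=\delta_c$ concentrates on the discontinuity. Sup-norm convergence of cdfs does not imply convergence of integrals of bounded measurable integrands, so without an a.e.-continuity statement (or an argument exploiting that Kolmogorov convergence pins atoms in place) the limit passage is unjustified exactly where the lemma is needed. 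The same unproved continuity is what your subsequence argument for uniformity on $[0,M]$ leans on, and that argument is in any case only sketched (``extended continuous-mapping argument using tightness'').

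The paper's proof sidesteps all of this with one further integration by parts: after moving $dH_n$ (resp.\ $dH$, $d(F_n-F)$) onto the complementary coordinates, the integrator $y_I\mapsto\int g(y,c)\prod_{i\notin I}dF_n(y_i)$ is monotone because $g$ is, hence has total variation at most $1$ uniformly in $c$ and $n$. Every error term is then bounded by products of sup-norms, e.g.\ $T\|H_n-H\|_\infty$ and $K\|H\|_\infty\|F_n-F\|_\infty$, so uniformity in $c$ costs nothing and no regularity of $F$ or of $\beta_i$ is ever invoked. If you want to keep your telescoping, the repair is exactly this: do not pass to the limit measure by measure via Portmanteau, but compare $\int h_n\,d\psi^{(i)}_{n,c}$ with $\int H\,d\psi^{(i)}_{F,c}$ by splitting off $\|h_n-H\|_\infty$ against the variation-$1$ integrator and then telescoping $\psi^{(i)}_{n,c}-\psi^{(i)}_{F,c}$ coordinate by coordinate with one more integration by parts, yielding a bound of order $\|H\|_\infty\bigl(\|G_n-F\|_\infty+\|F_n-F\|_\infty\bigr)$ uniform in $c$.
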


Since $H$ may be of infinite variation, the above integral is  defined by partial integration,
i.e.\
$$
\phi'(F)(H)(c)
=-\sum_{i=1}^T\int H(y_i)d\left(\int g(y,c)\left(\prod_{j\neq i} dF(y_j)\right)\right).
$$
\begin{proof}
Suppose $F_n\to F$, $t_{n}\to 0$, $H_n\to H\in D_0$ such that $F_n+t_nH_n\in D_\phi$ for all $n$.
The difference quotient can be written as
\begin{equation}
  \label{eq:hitprobnonpar_diffquot}
  \begin{split}
\frac{\phi(F_n+t_nH_n)(c)-\phi(F_n)(c)}{t_n}=&
\sum_{i=1}^T\int g(y,c)\left(\prod_{j\neq i} dF_n(y_j)\right)dH_n(y_i)
+\!\!\!\!\!\! \sum_{\substack{I\subset \{1,\dots,T\}\\|I|\geq 2}}\!\!\!\!\!\!t_n^{|I|-1}\!A_I,
  \end{split}
\end{equation}
where $A_I=\int \!g(y,c) \left(\prod_{i\notin I}dF_n(y_i)\right) \left(\prod_{i\in I}dH_n(y_i)\right).$
We first show that the second terms converges uniformly in $c$ to $0$.
Partial integration (applied several times) gives that for $I\subset \{1,\dots,T\},|I|\geq 2$,
\begin{equation}
  \label{eq:hitprobnonpar_diffquot_partint}
A_I
=
(-1)^{|I|}\int \left(\prod_{i\in I}H_n(y_i)\right)dB_I(y_I),
\end{equation}
where $B_I=\int g(y,c)  \prod_{i\notin I}dF_n(y_i)$.
\cc{There is no remainder term here, as $F_n+t_nH_n$ and $F_n$ are cdfs on $\R$ and thus $H_n(-\infty)=H(\infty)=0$.}
 $g(y,c)$ is monotonically increasing in $y$ thus $B_I$
is increasing in $y_{I}=(y_i)_{i\in I}$. Thus the total variation of $B_I$ is bounded by 1.
\cc{$\sup _c|\sup_{y_{I}}\int g(y,c)\left(\prod_{i\notin I} dF_n(y_i)\right)-
\inf_{y_{I}}\int g(y,c)\left(\prod_{i\notin I} dF_n(y_i)\right)|\leq 1.$}
Hence, using (\ref{eq:hitprobnonpar_diffquot_partint}),
$$
t_n^{|I|-1}A_I
\leq t_n^{|I|-1}\left(\sup_{z\in \R}|H_n(z)|\right)^{|I|},$$
which converges to $0$ uniformly in $c$.
Thus the second term of (\ref{eq:hitprobnonpar_diffquot}) converges to $0$ uniformly in $c$.

Next, we show that first term on the right hand side of  (\ref{eq:hitprobnonpar_diffquot}), henceforth denoted by $\zeta$, converges uniformly in $c$ to $\phi'(F)(H)$.
Consider the decomposition
\begin{equation}
  \label{eq:decomp_deriv}
  \begin{split}
\zeta
- \phi'(F)(H)
=C_n+\sum_{i=1}^T\int g(y,c)\left((\prod_{j\neq i} dF_n(y_j))-\prod_{j\neq i}dF(y_j)\right)dH(y_i),
  \end{split}
\end{equation}
where $C_n=\sum_{i=1}^T\int g(y,c)\left(\prod_{j\neq i} dF_n(y_j)\right)\left(dH_n(y_i)-dH(y_i)\right)$.
  As mentioned above,
$H$ might be of infinite variation,  thus $C_n$ is
defined via partial integration, i.e.
\begin{align*}
C_n=-\sum_{i=1}^T
   \int (H_n(y_i)-H(y_i))d\left(\int g(y,c)  \prod_{j\neq i}dF_n(y_j)\right).
\end{align*}
\cc{Again, $H_n(-\infty)=H(-\infty)=H(\infty)=H_n(\infty)=0$, thus the other terms disappear.}
As above, the total variation of the integrator is bounded by 1, thus
$
\left| C_n \right|\leq
T \|H_n-H\|,
$
which converges to $0$ as $n\to \infty$.

We can rewrite the second term of  (\ref{eq:decomp_deriv})
as
$$\sum_{i=1}^T\sum_{k=1,k\neq i}^T\int D_{ik}\left((\prod_{j\neq i,j< k} dF_n(y_j))\prod_{j\neq i,j> k} dF(y_j)\right).
$$
where $D_{ik}=\int g(y,c)dH(y_i)(dF_n(y_k)-dF(y_k))$. Using partial integration,
\begin{align*}
D_{ik}
=&
-\!\!
\int\! \!H(y_i)dg(y_{-i},dy_i,c)(dF_n(y_k)-dF(y_k))
=
\!\!\int\!\! H(y_i)(F_n(y_k)-F(y_k))dg(y_{-i,-k},dy_i,dy_k,c),
\end{align*}
where negative subscripts  denote removal of the corresponding component of the vector
(e.g.\ $y_{-i}$ is the vector $y$ with the $i$th component removed).
Since $g$ is of bounded variation with respect to $y_i$ and $y_k$ independent of $c$ and $y_{-i,-k}$,
we can bound this uniformly above by
$
 K \sup_z |H(z)|\sup_z |F_{n}(z)-F(z)|
$ for some fixed $K>0$.
Thus, since the variation of $F_{n}-F$ is bounded by $2$,
the second term of   (\ref{eq:decomp_deriv}) converges to $0$ uniformly in $c$.
\end{proof}

The following lemma is needed to use the result about the inverse mapping, see Lemma \ref{le:Haddiffinversemap}.

\begin{lemma}
  \label{le:diffcusumthreshold}
  Let $F$ be a cumulative distribution function with continuous
  bounded positive derivative $f$.    Let
  $Y=(Y_1,\dots,Y_T)$ where $Y_1,\dots,Y_T\sim F$ independently. Then
  the following holds.
\begin{itemize}%
\setlength{\itemsep}{0pt}%
\setlength{\parskip}{0pt}%
\item[a)] $c \mapsto \Prob(m(Y)\leq c)$ is continuously differentiable for $c>0$ (call this derivative $g$).
\item[b)] $g$ is  bounded away from $0$ (at least on some compact sets),
\item[c)] Let $f_n$ be densities that converge uniformly to $f$.
       Let  $Y^{(n)}=(Y^{(n)}_1,\dots,Y^{(n)}_{T})\sim f_{n}$  and denote  the derivative  of  $c \mapsto \Prob(m(Y^{(n)})\leq c)$ for $c>0$ by $g^{(n)}$.
 Then $g^{(n)}$ converges uniformly to $g$ on any compact set $K\subset (0,\infty)$.
\end{itemize}
\end{lemma}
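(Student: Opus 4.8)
We sketch how to obtain all three claims from a single backward recursion for the survival probabilities of the reflected random walk $S_0=0$, $S_t=(S_{t-1}+Y_t)^+$; recall from Section~\ref{sec:theor:ex:CUSUM:nonpar} that this is the CUSUM recursion and that $m(Y)=\max_{1\le t\le T}S_t$. For $t=0,\dots,T$ and $0\le s\le c$ set
$$
h_t(s;c)=\Prob(S_1<c,\dots,S_t<c\mid S_0=s),
$$
so that $\Prob(m(Y)\le c)=h_T(0;c)$ (the choice of strict versus weak inequalities is immaterial here since $\Prob(m(Y)=c)=0$ when $F$ has a density). Conditioning on $Y_1$, splitting on the sign of $s+Y_1$, and substituting $u=s+Y_1$ on $\{s+Y_1>0\}$ gives, for $t\ge1$,
$$
h_t(s;c)=F(-s)\,h_{t-1}(0;c)+\int_0^c h_{t-1}(u;c)\,f(u-s)\,du,\qquad h_0\equiv1 .
$$
The right-hand side also makes sense for $s=c$, and we take that as the definition of $h_t(c;c)$.

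\emph{Parts (a) and (b).} The plan is to show by induction on $t$ that $h_t$ and $\partial_c h_t$ are well defined, jointly continuous, and nonnegative on $\{(s,c):0\le s\le c,\ c>0\}$, and that $h_t(c;c)>0$ for all $c>0$. The base case is trivial. In the step, the only $c$-dependence on the right-hand side is through $h_{t-1}(0;c)$, the upper limit $c$, and $h_{t-1}(\cdot\,;c)$ in the integrand; the Leibniz rule turns $\partial_c\int_0^c h_{t-1}(u;c)f(u-s)\,du$ into the boundary term $h_{t-1}(c;c)f(c-s)$ plus $\int_0^c\partial_c h_{t-1}(u;c)f(u-s)\,du$, and joint continuity and nonnegativity of every resulting term follow from the induction hypothesis together with $f$ being continuous, bounded and positive (positivity yields $F(0),F(-c)>0$, hence $h_t(c;c)\ge F(-c)\,h_{t-1}(0;c)$ and $h_k(0;c)\ge F(0)\,h_{k-1}(0;c)$, so iterating down to $h_0\equiv1$ gives $h_t(c;c)>0$). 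Crucially one differentiates only in $c$, never in $s$, so no differentiability of $f$ is needed. Evaluating at $s=0$ gives part (a): $c\mapsto\Prob(m(Y)\le c)=h_T(0;c)$ is $C^1$ on $(0,\infty)$ with $g=\partial_c h_T(0;\cdot)$. Discarding the two nonnegative terms in the expression for $\partial_c h_T(0;c)$ leaves
$$
g(c)=\partial_c h_T(0;c)\ \ge\ h_{T-1}(c;c)\,f(c)\ >\ 0 ,
$$
and since $c\mapsto h_{T-1}(c;c)$ is continuous (part (a)) and positive, and likewise $f$, the product is bounded below by a positive constant on any compact $K\subset(0,\infty)$; this is part (b).

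\emph{Part (c).} Form the analogous quantities $h_t^{(n)}$ from $f_n$ and $F_n(x)=\int_{-\infty}^x f_n$, which obey the same recursion. The plan is to prove by induction on $t$ that, for each $M>0$, $h_t^{(n)}\to h_t$ and $\partial_c h_t^{(n)}\to\partial_c h_t$ uniformly on $\{0\le s\le c\le M\}$. In the inductive step each term of the recursion and of its $c$-derivative is a finite sum of products and of $[0,c]$-integrals of uniformly bounded factors that, by the induction hypothesis and the hypotheses on $f_n$, converge uniformly, so uniform convergence propagates. Taking $s=0$ yields $g^{(n)}=\partial_c h_T^{(n)}(0;\cdot)\to g$ uniformly on any compact $K\subset(0,\infty)$.

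The point that needs real care beyond bookkeeping is the lack of smoothness of $f$: the construction is arranged so that differentiation in $c$ only ever meets continuous (but not necessarily differentiable) objects, which is why the induction is carried out on \emph{joint continuity of $h_t$ and of $\partial_c h_t$} rather than on higher-order smoothness, and why the Leibniz boundary value $h_{t-1}(c;c)$ has to be tracked. A secondary point: in part (c) one also uses that $F_n\to F$ uniformly on bounded intervals, which is not implied by uniform convergence of the densities alone but holds in the applications of this lemma, where the $f_n$ are rescalings $f_n(y)=\sigma_n f(\sigma_n y+\mu_n)$ with $(\mu_n,\sigma_n)\to(\mu,\sigma)$, so that $F_n(x)=F(\sigma_n x+\mu_n)\to F(x)$.
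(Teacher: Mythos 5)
Your argument is correct, but it takes a genuinely different route from the paper. The paper decomposes the event $\{m(Y)\leq c\}$ according to the (almost surely unique) index $i$ at which the CUSUM attains its maximum, i.e.\ over the sets $A_i=\{y: R_i(y)>R_\nu(y)\ \forall\nu\neq i\}$, writes $\Prob(m(Y)\le c,\,Y\in A_i)$ as an explicit $T$-fold integral, and differentiates under the integral sign in $c$; this yields a closed-form expression $g(c)=\sum_i g_i(c)$ from which continuity follows by dominated convergence, positivity gives b), and c) is obtained by a telescoping sum replacing $f$ by $f_n$ one factor at a time. You instead exploit the Markov structure of the reflected walk through the backward recursion $h_t(s;c)=F(-s)h_{t-1}(0;c)+\int_0^c h_{t-1}(u;c)f(u-s)\,du$ and run an induction on $t$ for joint continuity of $h_t$ and $\partial_c h_t$, with the Leibniz boundary term $h_{t-1}(c;c)f(c-s)$ doing the work. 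What your route buys is an essentially mechanical induction, a clean explicit lower bound $g(c)\ge h_{T-1}(c;c)f(c)>0$ for b) (the paper gets b) only from positivity plus continuity of its integral formula), and a proof of c) that propagates uniform convergence through the recursion rather than through a telescoping estimate on $T$-fold integrals; it is also closer in spirit to the Markov-chain representation used for computation elsewhere in the paper. What the paper's route buys is an explicit formula for $g$ in one step, with no need to track the boundary values $h_{t-1}(c;c)$ or justify Leibniz on a triangular domain.

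One correction to your closing remark: the convergence $F_n\to F$ that you treat as an extra hypothesis is in fact implied by the lemma's stated assumptions. Since the $f_n$ are probability densities converging (uniformly, hence pointwise) to the density $f$, Scheff\'e's lemma gives $\int|f_n-f|\to 0$, and therefore $\sup_x|F_n(x)-F(x)|\le\int|f_n-f|\to 0$. So no appeal to the special rescaled form of $f_n$ in the application is needed, and your induction for c) goes through under the hypotheses as stated; the same $L^1$ convergence is also what makes the paper's telescoping/dominated-convergence step rigorous.
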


\begin{proof}
For $0 \leq i \leq T$, let $A_{i}=\{y\in \R^T: R_i(y)>R_{\nu}(y) \forall \nu \neq i\}$.
$A_i$ are  disjoint sets with $\Prob(Y\in \bigcup_iA_i)=1$.
Thus,
$\Prob(m(Y)\leq c)=
\sum_i \Prob(m(Y)\leq c, Y\in A_i)
$
and
$g(c)=\sum_{i=1}^T g_i(c)$,  where
\begin{align*}
&g_i(c)=\frac{\partial}{\partial c} \Prob(m(Y)\leq c, Y\in A_i)=
\frac{\partial}{\partial c} \Prob(R_i(Y)\leq c, Y\in A_i)\\
&=
\frac{\partial}{\partial c} \Prob(Y_i\leq c- R_{i-1}(Y), Y\in A_i)
=
 \int \frac{\partial}{\partial c}\int^{c- R_{i-1}(y)} 1(y\in A_i)f(y_i)dy_i \prod_{\nu\neq i}f(y_{\nu})dy_{-i}\\
&=
 \int  1((y_1,\dots,y_{i-1},{c- R_{i-1}(y)}, y_{i+1}, \dots,y_T)\in A_i)
 f({c- R_{i-1}(y)})\prod_{\nu\neq i}f(y_{\nu})dy_{-i}.
\end{align*}
\cc{note that $R_{i-1}(y)$ does not depend on $y_{i}$.}
The continuity of $g$ follows because because of the  dominated convergence theorem.
\cc{The dominating function is
$
 \|f\|_{\infty} \prod_{\nu\neq i}f(y_{\nu})
$.  This is integrable (the continuity of $f$ and its integrability ensure that $\|f\|_{\infty}$ is finite.}
This shows a).
Statement b) follows from $g$ being positive and its continuity.
For c), use a telescoping sum to go from the product of  $f$s to the product of $f_n$s. Then use the dominated convergence theorem.
\end{proof}

\subsection{Hadamard differentiability of hitting probability in simple examples}
\label{sec:haddifhitprobex}
\begin{proof}[Lemma \ref{le:HaddiffCUSUMhit}]
  $\hit$ can be written as $\phi\circ g$, where $\phi$ is as in Appendix \ref{sec:diff-hitt-prob} and
 $g:D_q\to D_q, g(P;\xi)=(x\mapsto
  P(x\xi_2+\xi_1+\Delta/2)$. \cc{$\Prob(Y\leq x)=\Prob(\frac{X-\xi_1-\Delta/2}{\xi_2}\leq x)
=\Prob(X\leq \xi_1+\Delta/2+\xi_2 x).$}

  We will show that $g$ is Hadamard differentiable at $P$ around $\xi$
  tangentially to $D_0$. Clearly, $g$ is linear in $P$. Thus for
  $t_n\to 0$, $h_n\to h\in D_0$, $\xi_n\to \xi$,
\begin{align*}
\frac{g(P+t_nh_n;\xi_n)-g(P;\xi_n)}{t_n}-g(h;\xi)=&
  g(h_n;\xi_n)-g(h;\xi)\\=&
  (g(h_n;\xi_n)-g(h;\xi_n))+(g(h;\xi_n)-g(h;\xi)).
\end{align*}
The first term
  converges uniformly to $0$ as $h_n\to h$. The second term converges
  to $0$ as $h\in D_0$ implies that $h$ is uniformly continuous.

  Lemma \ref{le:diffhitprob} allows us to use the chain rule in Lemma
  \ref{le:chainrule}, to show the differentiability of $\hit$.

The differentiability of   $c_{\hit}$ can be seen as follows:
$\xi_n\to \xi$ implies that the derivative of $g(P;\xi_n)$ converges uniformly to
the derivative of $g(P;\xi)$. As
$g(P;\xi_n)'(x)=f(x\xi_{2n}+\xi_{1n}+\Delta/2)\xi_{2n}$ this is implied by the uniform continuity of $f$.
Thus, by  Lemma \ref{le:diffcusumthreshold},
the derivative of $\hit(P;\xi_n)$ converges uniformly to the derivative of
$\hit(P;\xi)$.
Thus, the result follows using the chain rule (Lemma
\ref{le:chainrule}), the differentiability of $\hit$, and the
differentiability of the inverse (Lemma \ref{le:Haddiffinversemap}).
\end{proof}

\begin{proof}[Lemma \ref{le:HaddiffCUSUMhitNORMAL}]
Let $g:(\R\times (0,\infty))^2\to D_q$,
 $g(\mu,\sigma,\xi)=(x\mapsto
\Phi(\frac{\xi_1+\Delta/2+\xi_2 x-\mu}{\sigma}))$.
\cc{$\Prob(Y\leq x)=\Prob(\frac{X-\xi_1-\Delta/2}{\xi_2}\leq x)
=\Prob(X\leq \xi_1+\Delta/2+\xi_2 x)=
\Phi(\frac{\xi_1+\Delta/2+\xi_2 x-\mu}{\sigma}).$}
Then, as in the proof of   Lemma \ref{le:HaddiffCUSUMhit},
 $\hit^N=\phi\circ g$.
The proof can be completed with similar steps.
\end{proof}

\setlength{\bibsep}{0cm}
\begin{articlestyle}
\end{articlestyle}

\end{document}